\documentclass[10pt,twocolumn]{IEEEtran}
\usepackage{cite}
\usepackage{amsmath,amssymb,amsfonts}
\usepackage{algorithmic}
\usepackage{graphicx}
\usepackage{subcaption}
\usepackage{textcomp}
\usepackage{xcolor}
\usepackage{multirow}
\newtheorem{problem}{Problem}
\newtheorem{lemma}{Lemma}

\newtheorem{remark}{Remark}
\newtheorem{corollary}{Corollary}
\newtheorem{theorem}{Theorem}
\newenvironment{proof}{\paragraph{Proof:}}{\hfill$\square$}
\newenvironment{proofie}{\paragraph{Proof of Theorem 1:}}{\hfill$\square$}

\def\BibTeX{{\rm B\kern-.05em{\sc i\kern-.025em b}\kern-.08em
    T\kern-.1667em\lower.7ex\hbox{E}\kern-.125emX}}
\begin{document}
\title{Minimizing the Risk of Spreading Processes via Surveillance Schedules and Sparse Control}
\author{Vera L. J. Somers and Ian R. Manchester
\thanks{The authors are with the Australian Centre for Field Robotics (ACFR), School of Aerospace, Mechanical and Mechatronic Engineering,
        University of Sydney, NSW 2006, Australia
        {\tt\small \{v.somers, i.manchester\}@acfr.usyd.edu.au}}}

\maketitle

\begin{abstract}
In this paper, we propose an optimization framework that combines surveillance schedules and sparse control to bound the risk of spreading processes such as epidemics and wildfires. Here, risk is considered the risk of an undetected outbreak, i.e. the product of the probability of an outbreak and the impact of that outbreak, and we can bound or minimize the risk by resource allocation and persistent monitoring schedules. The presented framework utilizes the properties of positive systems and convex optimization to provide scalable algorithms for both surveillance and intervention purposes. We demonstrate with different spreading process examples how the method can incorporate different parameters and scenarios such as a vaccination strategy for epidemics and the effect of vegetation, wind and outbreak rate on a wildfire in persistent monitoring scenarios. 
\end{abstract}

\begin{IEEEkeywords}
Spreading processes, optimization, nonlinear systems, biological systems. 
\end{IEEEkeywords}

\section{Introduction}
\label{sec:introduction}
\IEEEPARstart{E}{pidemics} and wildfires claim each year a significant number of lives, have major impacts on the economy and can cause signification disruptions to life as we know it as the outbreak of COVID-19 and the Australian and Californian wildfire seasons of 2019/2020 have demonstrated  \cite{jolly2015climate,Burrows2019,Hughes2016,peeri2020sars,gasmi2020individual}. What they both have in common is that they can be thought of and modeled as a spreading process in which an initial localized outbreak spreads rapidly to neighboring nodes in a network \cite{Karafyllidis1997a,Bloem2008,Nowzari2016}.

The real-world risks associated with these events and other spreading processes, such as computer viruses, have sparked significant research into methods for modeling, prediction and control, see the recent surveys \cite{Nowzari2016,pare2020modeling,zino2021analysis}. The aftermath of two decades of multiple deadly coronavirus epidemics alone (SARS, MERS, COVID-19) stresses the importance and (current shortcoming) of being able to not only quickly map and monitor the spread of infection, but more importantly get an accurate risk estimate to plan appropriate intervention strategies such as lockdowns, testing and vaccination \cite{peeri2020sars}. For wildfires specifically, there is an interest in remote sensing and in particular to connect existing fire propagation models and simulators (e.g. \cite{Karafyllidis1997a,  Johnston2006}) to path planning for spatial monitoring (e.g. \cite{smith2012persistent, penicka2017dubins}) to accurately estimate and assess the fire risk \cite{Burrows2019,Yebra2018}. 

The large scale these spreading processes typically evolve on imposes another challenge and emphasizes the importance of scalability of computational methods. Furthermore, sparsity of resource allocation solutions is often needed, because it can be difficult to distribute resources widely, such as vaccines for epidemics or waterbombing allocations for wildfires. 

In this paper we provide a flexible optimization framework to bound and minimize the risk of spreading processes via surveillance schedules and sparse resource allocation. The presented method distinguishes itself from previous research by: 1) providing the joint optimization of both surveillance and intervention via sensor scheduling and sparse control, 2) proposing a risk model with node dependent costs, 3) sparsity inducing resource models for spreading, recovery and outbreak rate and 4) scalability due to use of convex programming. For wildfires specific we bridge the gap between spreading models and robotic path planning.

\subsection{Relevant Literature}
\subsubsection{Spreading Processes}
Spreading processes are commonly modeled as Markov processes. The most well-known models, coming from epidemics literature, are the Susceptible-Infected-Susceptible (SIS) model and the Susceptible-Infected-Removed (SIR) model and their variants \cite{kermark1927contributions,bailey1975mathematical}. These stochastic models can be approximated as ordinary differential equation (ODE) models, which can in turn be approximated by a linear model \cite{Ahn2013,VanMieghem:2009,Nowzari2016} which is proven by \cite{VanMieghem:2009} to be an upperbound and is therefore usually the object of study.  

Similar models of firefront propagation are stochastic cellular automata (CA) models, which are based on fire transition probabilities from one cell to another and are most commonly used due to their ease of implementation, low computational cost and suitability for heterogeneous conditions \cite{Sullivan2009}. When based on accurate information about vegetation, terrain, and weather conditions, CA models have been shown to give quantitatively accurate predictions when compared to real wildfire spread \cite{Alexandridis2008a,Kelso2015}. 

Given a spreading process and its known model, there are two main problems that can be considered. First of all, surveillance scheduling, i.e. how to estimate the state of the process. Especially since spreading processes evolve on large networks, a sampling schedule is needed to determine the state of the outbreak. The second problem is intervention, i.e. how to modify the spreading process and reduce the impact of an outbreak. 

Both of these problems have been approached by ranking nodes, for which many different methods exist. The most common ones are based on graph topology, e.g. eigenvector centrality\cite{Youssef2011}. However, in \cite{ Liu2016} is shown that considering topological features alone is in most cases not enough to indicate influential nodes. In wildfire models, it is particularly important to consider spatially-varying spreading rates, due to the effects of vegetation, terrain, and weather. 

A method that goes beyond simple topological features is minimizing the spectral radius, i.e. minimizing the dominant eigenvalue of the linear dynamics, which results in a reduction in overall spreading rate across the network \cite{VanMieghem:2009,van2011decreasing,Preciado2014,Nowzari2017,Youssef2011}. However, in many cases it is important to take into account node-dependent costs for targeted risk minimization. For example, higher cost may be associated with nodes representing populated areas when controlling a wildfire, or more vulnerable members of the community in an epidemic. This is not taken into account when minimizing the spectral radius.

\subsubsection{Surveillance}
Surveillance scheduling, or sensor scheduling, consists of finding an optimal schedule for taking measurements at individual nodes to estimate or influence the state of the total system, given a limited budget on the number of nodes that can be sensed at the same time. In literature sensor scheduling is often considered from a control or theoretical perspective. E.g. sensor and actuator placement to optimize controllabilty and observability for dynamical networks is studied in \cite{Summers2016}. For linear dynamical systems tractable algorithms are developed in \cite{Vitus2012} and a simple example in application for active mapping is presented. 

Sensor scheduling applied to spreading processes on the other hand is more limited. For epidemics, sensor scheduling could be interpreted as a testing schedule or the selection of individuals to be tested for which a interest has arrived since COVID-19 \cite{pezzutto2021smart}. For wildfires, sensor scheduling or more specifically persistent monitoring with use of UAVs is a more active area of research. However most papers focus more on fire mapping and boundary estimation or path planning instead of risk minimization of an undetected outbreak\cite{skeele2016aerial,rabinovich2018toward}.  

\subsubsection{Intervention}
For the intervention problem, the problems of minimizing the spreading rate by removing either a fixed number of links or nodes in the network are both NP-hard \cite{van2011decreasing}. This fact has motivated the study of heuristics methods based on node-rankings of various forms \cite{Hadjichrysanthou2015,Dhingra2018,Lindmark,matsypura2018wildfire}. However, in general these approaches will not be optimal in any sense and furthermore the assumption of complete link or node removal is often unrealistic.

Therefore a more suitable approach and realistic assumption is to reduce the impact of an outbreak by tuning and controlling the spreading rate and recovery rate. This can be achieved by applying resources to the nodes and links. Solutions to the optimal control problem of regulating epidemic spread over networks are presented in \cite{Giamberardino2017,Bloem2008,Khanafer,Mai2018,Liu2019b,Torres2017}. More recently limitations on feedback control have been identified in \cite{ye2020distributed,ye2021applications}. It is proven that a large class of distributed controllers cannot guarantee convergence to the disease-free equilibrium utilizing the recovery rate as control input. 

Linear spreading processes are positive systems \cite{berman1994nonnegative}, which enables control based on linear programming (LP) \cite{briat2013robust,rantzer2015scalable} and geometric programming (GP) \cite{Boyd2007}, which allows global optima to be found with efficient numerical methods and significantly improved scalability compared to general linear systems.  Optimal resource allocation via geometric programming has been studied in \cite{Preciado2014,Zhang2018,Nowzari2017, Ogura2019}. However, most of these do not result in sparse resource allocation. In general, the problem of designing  sparse feedback gains for linear systems is non-convex and computationally challenging \cite{Lin2013}. In addition, most of these papers consider minimizing the spectral radius instead of node dependent costs. 

Our proposed method takes into account targeted risk minimization and sensor scheduling and is similar in a sense to GP, that it is convex under logarithmic transformation, in particular it is an exponential cone program. Furthermore, under this transformation our resource model corresponds to $\ell_{1}$ type constraints which are known to encourage sparsity \cite{tibshirani1996regression, candes2006robust, donoho2006compressed} and sparsity can be further increased using reweighted iterations \cite{Candes2008}. Although we focus on spreading processes, the proposed framework is applicable to any linear positive system for which sparse control applications are needed.

\subsection{Paper Structure}
The remainder of the paper is organized as follows. In Section II we discuss the problem and model formulation in detail and introduce respectively the surveillance model in Section III and the resource model in Section IV. The convex optimization framework is presented in Section V. The resulting risk maps, persistent monitoring schedules and resource allocation for both epidemics and wildfire examples are demonstrated in the results Section V. We convey that the proposed method is easy applicable and provides quick and convincing results. 

This paper builds on work presented in \cite{icrapaper2019,LCSS2021}. In \cite{icrapaper2019} the concept of a priority map, a map that defines critical areas or nodes on a graph, was explained in context of a preliminary surveillance problem for wildfires. \cite{LCSS2021} was limited to the intervention scenario by itself with only flexibility in spreading and recovery rate. This paper provides a more flexible framework based on a risk model that besides priority also takes into account outbreak probability, a more extensive resource model including outbreak and revisit rate and more comprehensive results including both surveillance and intervention scenarios.  

\section{MODEL FORMULATION}
The problem that is considered in this paper is to design a surveillance schedule and determine intervention strategies to bound the risk of an undetected outbreak of a spreading process. Here, risk is the product of the probability of an outbreak starting in a particular area and the impact of the process spreading from that location. 

To model this problem, let us consider a network graph $\mathcal{G}$ with $n$ nodes and edge set $\mathcal{E}$, where at each node $i \in \{1, 2, ..., n\}$ an outbreak can occur. We model the probability of an outbreak occurring at a node $i$ within a time interval as a Poisson process \cite[Chapter~2]{ross1996stochastic}. A Poisson process is a counting process with rate $\lambda >0$, in which $N(t)$ represents the total number of events up to time $t$, such that
\begin{enumerate}
\item $N(0)=0$,
\item The process has independent increments, i.e. $P\{N(t+s) - N(s)\}$ for any $s \geq 0$, $t>0$  is independent of $N(s)$,
\item The number of events $N(t)$ happening in any interval of length $t$ is Poisson distributed with mean $\lambda t$. That is, for all $s$, $t\geq 0$
\end{enumerate} 
\begin{equation}
P\left\{N(t+s) - N(s)=m \right\}=e^{-\lambda t} \frac{(\lambda t)^m}{m!} \quad m=0,1,...
\end{equation}

In our model outbreaks occurring at each node $i$ are modeled as independent Poisson processes with rates $\lambda_i$. An outbreak then spreads according to a cellular automata model. Cellular automata simulations such as \cite{Karafyllidis1997a,Johnston2006,Alexandridis2008a} use an underlying stochastic model with Markov transition probabilities and each node $i \in \{1, 2, ..., n\}$ has states $X_i(t)$ (infected) and $Z_i(t)$ (removed) associated with it. We consider a basic susceptible-infected-removed (SIR) model \cite{kermark1927contributions} in which a node can be in three states: infected, i.e., $X_i(t)=1$, $Z_i(t)=0$, susceptible to infection, i.e., $X_i(t)=0$, $Z_i(t)=0$, or removed, i.e., $X_i(t)=0$, $Z_i(t)=1$. 

The transition probabilities of the infected and removed states $X(t)$ and $Z(t)$ associated with the continuous time Markov chain model \cite[Chapter~5]{ross1996stochastic} are defined for a small time period $\Delta_t$ as
\begin{equation}
\label{eq:Stoc}
\begin{split}
Pr\left\{ \right. & \left(X_{i}(t+\Delta_t),Z_{i}(t+\Delta_t)\right) =(1,0)| \\
& \left.\left(X_{i}(t),Z_{i}(t)\right)=(0,0) \right\} = \sum\limits_{j \in  \mathcal{N}_{i}} \beta_{ij}X_{j}(t)\Delta_t + o(\Delta_t)\\
Pr\left\{ \right. & \left(X_{i}(t+\Delta_t),Z_{i}(t+\Delta_t)\right) =(0,1)| \\
& \left.\left(X_{i}(t),Z_{i}(t)\right)=(1,0) \right\} = \delta_{i}\Delta_t + o(\Delta_t)
\end{split}
\end{equation}
where $\beta_{ij}$ is the spreading rate, $\delta_i$ the rate of removal and $\mathcal{N}_{i}$ the neighborhood of node $i$. 
%
We recall
\begin{lemma}
\label{lem:nonlin}
Let us define $E[X_{i}(t)] = P(X_{i}(t))=1 $ as the probability of a node $i$ being infected at time $t$ and $\dot{z}$ as the rate of removal. If $\chi_{i}(0) = E[X_{i}(0)]$ then the upperbound of the stochastic probabilities, i.e.  $\chi_{i}(t) \geq E[X_{i}(t)]$ for all $t \geq 0$, can be established via the following coupled differential equations 
\begin{align}
\dot{\chi}_{i}(t) &=(1-\chi_{i}(t) - z_{i}(t))\sum^{n}_{j=1}\beta_{ij}\chi_{j}(t)-\delta_{i}\chi_{i}(t) \label{eq:nonlin1}\\
\dot{z}_{i}(t) &=\delta_{i}\chi_{i}(t) \label{eq:R}.
\end{align}
\end{lemma}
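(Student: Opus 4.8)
The plan is to compare the deterministic system with the exact evolution of the one-node marginals of the Markov chain in \eqref{eq:Stoc}, and to close the argument with a comparison principle. First I would extract the exact mean dynamics from the generator of the continuous-time chain. Since each node sits in exactly one of the three states, the susceptibility indicator satisfies $S_i(t) = 1 - X_i(t) - Z_i(t) \in \{0,1\}$, and the forward Kolmogorov equations (equivalently Dynkin's formula) applied to \eqref{eq:Stoc} yield the exact but unclosed relations
\begin{align}
\frac{d}{dt}E[X_i(t)] &= \sum_{j=1}^{n}\beta_{ij}\,E\!\left[S_i(t)X_j(t)\right] - \delta_i E[X_i(t)], \nonumber\\
\frac{d}{dt}E[Z_i(t)] &= \delta_i E[X_i(t)]. \nonumber
\end{align}
Writing $p_i = E[X_i]$ and $q_i = E[Z_i]$, the removal term is reproduced exactly, so the only approximation in \eqref{eq:nonlin1}--\eqref{eq:R} is the moment closure $E[S_iX_j]\approx(1-p_i-q_i)p_j$ in the infection term.

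The key step is to control this closure error. I would argue that the susceptibility of $i$ is negatively correlated with the infection of a neighbor $j$, i.e. $\operatorname{Cov}(S_i(t),X_j(t))\le 0$, so that $E[S_iX_j]\le(1-p_i-q_i)p_j$; intuitively an infected neighbor can only have exposed $i$ to infection or removal, making it less likely that $i$ is still susceptible. Rigorously this anti-correlation follows from the monotone (attractive) structure of the SIR process, either through an FKG-type correlation inequality on the partial order of configurations or through an explicit monotone coupling. Granting it, the exact marginals obey the differential inequality $\dot p_i \le (1-p_i-q_i)\sum_j\beta_{ij}p_j - \delta_i p_i$ together with $\dot q_i = \delta_i p_i$, whose right-hand side is exactly that of \eqref{eq:nonlin1}--\eqref{eq:R}.

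Finally I would order the trajectories from the common initial data $p_i(0)=\chi_i(0)$, $q_i(0)=z_i(0)$. For the pure SIS version this is immediate, because the $\chi$-field is cooperative: the off-diagonal Jacobian entries $\partial\dot\chi_i/\partial\chi_j=(1-\chi_i-z_i)\beta_{ij}$ are nonnegative on the invariant region $\{\chi_i,z_i\ge 0,\ \chi_i+z_i\le 1\}$, and a Kamke--M\"uller comparison then gives $p_i(t)\le\chi_i(t)$. The main obstacle, and the step I expect to be genuinely delicate here, is that the removed state makes the full $(\chi,z)$ field non-cooperative: $\dot\chi_i$ is decreasing in $z_i$ while $\dot z_i$ is increasing in $\chi_i$, a competitive coupling whose sign pattern cannot be repaired by any reordering of the states. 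Concretely, a first-crossing argument at the earliest time $p_k=\chi_k$ needs $q_k\ge z_k$ to offset the larger susceptibility factor of the exact system, whereas $\dot q_i=\delta_i p_i\le\delta_i\chi_i=\dot z_i$ forces $q_i\le z_i$; so the anti-correlation bound (which helps) and the removed-state ordering (which hurts) must be balanced quantitatively rather than by a one-line monotonicity argument.

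A robust fallback that sidesteps this entirely is the cruder estimate $0\le S_i\le 1$, which gives $\dot p_i\le\sum_j\beta_{ij}p_j-\delta_i p_i$ and hence $p_i(t)\le u_i(t)$ for the positive linear system $\dot u=(B-D)u$, $u(0)=p(0)$, with $B=[\beta_{ij}]$ and $D=\operatorname{diag}(\delta_i)$; since $B-D$ is Metzler, comparison for positive systems applies directly. This recovers the linear upper bound referenced in the Introduction, at the cost of establishing a weaker bound than the nonlinear one stated in the lemma.
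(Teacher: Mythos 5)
Your decomposition --- exact marginal dynamics from the generator of \eqref{eq:Stoc}, a signed moment-closure error via a correlation inequality, and then a comparison principle to order trajectories --- is precisely the structure of the argument in the sources this paper leans on: the paper's own proof is only a citation to \cite{kermark1927contributions,Youssef2011,VanMieghem:2009,Nowzari2016,pare2020modeling}, with the remark that the decisive fact is the sign of the cross-moment $E[X_iX_j]$ (i.e.\ nonnegative correlation of infection indicators, equivalently your $\operatorname{Cov}(S_i,X_j)\le 0$) and that the closure amounts to assuming zero covariance. So you have identified both ingredients correctly, your derivation of the exact unclosed equations is sound, and your analysis is considerably more explicit than what the paper records.

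The genuine gap is the one you flag yourself and then leave open: passing from the pointwise differential inequality $\dot p_i \le (1-p_i-q_i)\sum_j\beta_{ij}p_j-\delta_i p_i$, $\dot q_i=\delta_i p_i$ to the trajectory ordering $p_i(t)\le\chi_i(t)$. As you observe, the $(\chi,z)$ vector field is not cooperative ($\dot\chi_i$ decreases in $z_i$ while $\dot z_i$ increases in $\chi_i$, and no sign change of coordinates repairs the pattern), so Kamke--M\"uller does not apply off the shelf, and your own first-crossing sketch shows the removed-state ordering works against you. A proof that stops here has not established the lemma, and the fallback $0\le S_i\le 1$ proves only the linear bound --- essentially Lemma~\ref{lem:Alin} applied directly to the means --- which is strictly weaker than the nonlinear claim \eqref{eq:nonlin1}--\eqref{eq:R}. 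To close the argument you must either supply a dedicated comparison argument adapted to the SIR structure (this is exactly the content delegated to \cite{VanMieghem:2009,Youssef2011}, and it is the part of the proof that is genuinely nontrivial) or cite those results as the paper does; be aware also that your proposed FKG/monotone-coupling justification of $\operatorname{Cov}(S_i,X_j)\le 0$ is itself asserted rather than proved, so as written the proposal contains two IOUs, of which the trajectory-ordering one is the more serious.
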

\begin{proof}
Similar derivations for the homogeneous differential equations for the SIR model are given in \cite{kermark1927contributions} and for heterogeneous equations in \cite{Youssef2011,Nowzari2016,pare2020modeling}. It is proven in \cite{VanMieghem:2009,Youssef2011}, by showing that $E[X_{i}X_{j}] \geq$ 0 for all i $\neq$ j, that $\chi_{i}(t) \geq E[X_{i}(t)]$ for all $t \geq 0$.
\end{proof}
 Here, the main assumption taken is that all the random variables have zero covariance.

A linear model can be obtained by linearizing the deterministic model, i.e. \eqref{eq:nonlin1} and \eqref{eq:R}, around the infection-free equilibrium point ($\mathbf{x}=0$, $\mathbf{z}=0$). Hence, we obtain
\begin{align}
\dot{x}_{i}(t) &=\sum^{n}_{j=1}\beta_{ij}x_{j}(t)-\delta_{i}x_{i}(t) \label{eqL1} \\
\dot{z}_{i}(t) &=\delta_{i}x_{i}(t) \label{eqL2}.
\end{align}
We can combine \eqref{eqL1} for all $i$ as
\begin{equation}
\label{eq1}
   \dot{x}(t)=Ax(t) 
\end{equation}
with initial condition $x(0)$ where $x(t)=[x_{1}(t),...,x_{n}(t)]^{T}$ with $t\geq0$, is the state of the system and the state matrix $A$ is defined by the linearized spreading dynamics 
 \begin{equation}
  \label{eq:epi}
 a_{ij} = 
\begin{cases}
-\delta_{i}  \le 0 &\quad \text{if}\quad  i=j, \\
\beta_{ij} \ge 0&\quad \text{if}\quad  i\neq j, (i,j) \in \mathcal{E}, \\
0  &\quad \text{otherwise}
 \end {cases}
 \end{equation}
where $\mathcal{E}$ is the set of edges of the graph.
\begin{lemma}
\label{lem:Alin}
Let $\chi(t)$ be the solution of \eqref{eq:nonlin1} with initial condition $\chi(0)$ and let $x(t)$ be the solution of \eqref{eq1}. The linear function \eqref{eq1} upperbounds the nonlinear one \eqref{eq:nonlin1}, i.e. $x(t) \geq \chi(t)$ for $t \geq 0$ when $x(0) \geq \chi(0)$. 
\end{lemma}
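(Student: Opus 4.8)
The plan is to pass to the error variable $e(t) = x(t) - \chi(t)$ and show that it obeys a positive linear system driven by a nonnegative forcing term. First I would expand the product in the nonlinear equation \eqref{eq:nonlin1} to separate out the linear part, obtaining
\[
\dot{\chi}_i(t) = \sum_{j=1}^n \beta_{ij}\chi_j(t) - \delta_i \chi_i(t) - \big(\chi_i(t) + z_i(t)\big)\sum_{j=1}^n \beta_{ij}\chi_j(t),
\]
where the first two terms coincide with the linear dynamics \eqref{eqL1} evaluated at $\chi$, i.e. with $(A\chi)_i$ using the definition \eqref{eq:epi} (taking $\beta_{ii}=0$). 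Subtracting this from the componentwise form of \eqref{eq1}, the linear parts combine through the same matrix $A$ and yield
\[
\dot{e}(t) = Ae(t) + g(t), \qquad g_i(t) = \big(\chi_i(t) + z_i(t)\big)\sum_{j=1}^n \beta_{ij}\chi_j(t).
\]

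Next I would argue that $g(t) \ge 0$ for all $t \ge 0$. Since $\beta_{ij} \ge 0$, each $g_i(t)$ is a product of sums of nonnegative quantities provided the nonlinear trajectory stays in the physically meaningful region, in particular $\chi_i(t) \ge 0$ and $z_i(t) \ge 0$. I would obtain this from a forward-invariance argument on the simplex $\{\chi_i \ge 0,\, z_i \ge 0,\, \chi_i + z_i \le 1\}$: on the face $z_i = 0$ one has $\dot{z}_i = \delta_i \chi_i \ge 0$ by \eqref{eq:R}; on the face $\chi_i = 0$ one has $\dot{\chi}_i = (1 - z_i)\sum_j \beta_{ij}\chi_j \ge 0$ whenever the remaining components are nonnegative and $z_i \le 1$; and on the face $\chi_i + z_i = 1$ one computes $\tfrac{d}{dt}(\chi_i + z_i) = (1 - \chi_i - z_i)\sum_j \beta_{ij}\chi_j = 0$, so the vector field is tangent there. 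Hence admissible initial data stay in the simplex and $g(t)\ge 0$.

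Finally, because $A$ is Metzler by \eqref{eq:epi} (nonpositive diagonal, nonnegative off-diagonal entries), the matrix exponential is entrywise nonnegative: writing $A = (A + cI) - cI$ with $c$ large enough that $A + cI \ge 0$ gives $e^{At} = e^{-ct}e^{(A+cI)t} \ge 0$ for all $t \ge 0$. Applying the variation-of-constants formula,
\[
e(t) = e^{At}e(0) + \int_0^t e^{A(t-s)}\,g(s)\,ds,
\]
the first term is nonnegative since $e(0) = x(0) - \chi(0) \ge 0$ by hypothesis and $e^{At}\ge 0$, and the integral is nonnegative as the integral of a nonnegative integrand. Therefore $e(t) \ge 0$, i.e. $x(t) \ge \chi(t)$, for all $t \ge 0$.

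The step I expect to be the main obstacle is rigorously establishing the forward invariance that guarantees $g(t) \ge 0$. The clean linear comparison at the end is routine once the forcing has a definite sign, but invariance requires handling the nonlinear field on the boundary of the feasible set and exploiting the coupling between $\chi$ and $z$ to rule out trajectories that leave the simplex; in particular the bound $z_i \le 1$ needed on the face $\chi_i = 0$ is itself a consequence of the simplex invariance, so the argument must be set up jointly rather than coordinate by coordinate.
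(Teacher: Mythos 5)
Your proof is correct. The paper itself gives no argument here --- it simply defers to the reference \cite{Preciado2014} --- and the argument in that reference is essentially the one you reconstruct: the nonlinear vector field differs from the linear one by the nonpositive term $-(\chi_i+z_i)\sum_j\beta_{ij}\chi_j$, and the Metzler structure of $A$ turns this sign information into a trajectory-wise comparison. Your error-dynamics formulation $\dot e = Ae + g$ with $g\ge 0$, followed by nonnegativity of $e^{At}$ and variation of constants, is a clean self-contained version of that comparison principle, and your identification of the forward-invariance of the simplex $\{\chi_i\ge 0,\ z_i\ge 0,\ \chi_i+z_i\le 1\}$ as the step that makes $g\ge 0$ rigorous is exactly right (Nagumo's subtangentiality condition on the product of these simplices handles the coordinate coupling you worry about). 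The only thing worth making explicit is the hypothesis that the nonlinear initial data lie in that simplex; the lemma as stated does not say so, but it is implicit in the model since $\chi_i(0)=E[X_i(0)]\in[0,1]$ and $z_i(0)\in[0,1]$ are probabilities, and without it the invariance argument (and hence $g\ge 0$) would not go through.
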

\begin{proof}
See \cite{Preciado2014}.
\end{proof}

Given an outbreak has occurred, we associate the following cost function with the Markov process 
\begin{equation}
\label{eq:cost1}
J_M(X_i(0))= \int_{0}^\infty e^{-rt}CE[X_{i}(t)] dt 
\end{equation}
where $C= [c_1, ..., c_n]$ is a row vector defining the cost associated with each node $i$, with each $c_i\ge 0$. The discount rate $r \ge 0$ can be tuned to emphasize near-term cost over long-term cost and can be understood as $r=\frac{1}{t_d}$, i.e. $J_M= \int_{0}^\infty e^{-\frac{t}{t_d}}CE[X_{i}(t)] dt$, where $t_d \ge 0$ is the discount time constant. A shorter time constant $t_d$, i.e. a larger discount rate $r$, indicates short-term cost is prioritized. Since $X_i$ is bounded, $r\geq 0$ is finite.

The cost function for the linear model can be defined by
\begin{equation}
\label{eq:cost2}
J_L\left(x(0)\right)=  \int_{0}^\infty e^{-rt} Cx(t)dt.
\end{equation}
where we assume that $r$ is large enough such that $A-rI$ is Hurwitz-stable, i.e. all eigenvalues have negative real parts, and hence, $J_L$ is finite.
\begin{lemma}
\label{lem:cfup}
If $x(0)=\chi(0) = E[X_{i}(0)]$ then the cost of the linear model \eqref{eq:cost2} upperbounds the cost of the nonlinear model, which upperbounds expected cost of the stochastic model \eqref{eq:cost1} for all $t \geq 0$, 
\begin{align}
\label{eq:Jup}
\int_{0}^\infty e^{-rt} Cx(t)dt &\geq \int_{0}^\infty e^{-rt} C\chi(t)dt \nonumber \\
&\geq \int_{0}^\infty e^{-rt}CE[X_{i}(t)] dt.
\end{align}
\end{lemma}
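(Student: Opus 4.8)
The plan is to reduce the claim to the two pointwise comparison results already in hand, Lemma~\ref{lem:Alin} and Lemma~\ref{lem:nonlin}, and then to promote these vector inequalities to the scalar integral inequality by exploiting the sign structure of the cost. First I would observe that since $x(0)=\chi(0)$ we have in particular $x(0) \geq \chi(0)$ componentwise, so Lemma~\ref{lem:Alin} applies and yields $x(t) \geq \chi(t)$ for all $t \geq 0$. Likewise, since $\chi(0)=E[X(0)]$, Lemma~\ref{lem:nonlin} gives $\chi_i(t) \geq E[X_i(t)]$ for every node $i$ and all $t \geq 0$. Chaining these we obtain the componentwise ordering $x(t) \geq \chi(t) \geq E[X(t)] \geq 0$ for all $t \geq 0$, where $E[X(t)]$ denotes the vector of nodewise probabilities $E[X_i(t)]$ and the last inequality holds because each $E[X_i(t)]$ is a probability.

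The next step is to turn this into an ordering of the scalar running costs. Because $C=[c_1,\dots,c_n]$ has all entries $c_i \geq 0$, left-multiplication by $C$ is monotone on nonnegative vectors, so the componentwise chain above implies
\[
Cx(t) \geq C\chi(t) \geq CE[X(t)] \geq 0 \quad \text{for all } t\geq 0.
\]
Multiplying by the strictly positive weight $e^{-rt}$ (recall $r \geq 0$) preserves each inequality pointwise in $t$, and then integrating over $[0,\infty)$, which is monotone with respect to pointwise nonnegative ordering, gives exactly the desired chain \eqref{eq:Jup}.

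The only point requiring a little care is finiteness of the three integrals, so that the inequalities hold between well-defined numbers rather than trivially in $+\infty$. Here I would invoke the standing assumption that $r$ is large enough that $A-rI$ is Hurwitz, which guarantees that $J_L=\int_0^\infty e^{-rt}Cx(t)\,dt$ is finite. Since $0 \leq e^{-rt}C\chi(t) \leq e^{-rt}Cx(t)$ and $0 \leq e^{-rt}CE[X(t)] \leq e^{-rt}Cx(t)$ for all $t$, a comparison (squeeze) argument bounds the remaining two integrals by $J_L$, so all three are finite. I do not expect any genuine obstacle in this proof: the substance is entirely contained in the previously established monotone comparison lemmas, and the remaining work is the routine verification that left-multiplication by the nonnegative row vector $C$ and by the positive kernel $e^{-rt}$, followed by integration, all preserve the ordering.
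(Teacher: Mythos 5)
Your proposal is correct and follows essentially the same route as the paper's own proof: chain Lemma~\ref{lem:Alin} and Lemma~\ref{lem:nonlin} to get $x(t) \geq \chi(t) \geq E[X_i(t)]$, then use $C \geq 0$ and monotonicity of the weighted integral. The only difference is that you additionally verify finiteness of the three integrals via comparison with $J_L$, a detail the paper leaves implicit.
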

\begin{proof}
From Lemma \ref{lem:nonlin} and \ref{lem:Alin}, we have $x(t) \geq \chi(t) \geq E[X_i(t)] $ and because $C\geq0$, \eqref{eq:Jup} follows. 
\end{proof}
 \begin{lemma}
\label{lem:p}
If $A-rI$ is Hurwitz-stable, the linear cost-to-go has the following structure, 
\begin{equation}
\label{eq:CC}
   \int_{0}^\infty e^{-rt} C x(t)dt = \sum_{i=1}^n p_i x_i(0)
\end{equation}
where $p$ is given by
\begin{equation}
\label{eq:p}
 p^{T}=C(rI-A)^{-1}
\end{equation} 
or by the equivalent linear program (LP)
\begin{equation}
\label{eq:LP1}
\begin{split}
    \text{minimize} & \quad \quad  |p|_{1} \\
    \text{such that} & \quad \quad p \geq 0, \quad \quad p^{T}A - r p^{T} \leq - C.
\end{split}
\end{equation}
$A$ is the state matrix as defined in \eqref{eq:epi}, $C$ is the cost vector, $r$ the discount rate and \eqref{eq:p} is an upper bound for the stochastic cost-to-go. 
\end{lemma}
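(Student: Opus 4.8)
The plan is to prove the three assertions of Lemma \ref{lem:p} in turn: the closed-form expression \eqref{eq:p}, its characterization as the minimizer of the LP \eqref{eq:LP1}, and the upper-bound property, reserving the entrywise nonnegativity of $(rI-A)^{-1}$ as the single structural fact that carries most of the argument.

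First I would establish \eqref{eq:CC} and \eqref{eq:p} by direct computation. Since $x(t)=e^{At}x(0)$ solves \eqref{eq1}, substituting into the integral gives $\int_0^\infty e^{-rt}Cx(t)\,dt = C\big(\int_0^\infty e^{(A-rI)t}\,dt\big)x(0)$. Because $A-rI$ is Hurwitz the matrix integral converges and equals $-(A-rI)^{-1}=(rI-A)^{-1}$, so the cost-to-go is $C(rI-A)^{-1}x(0)=p^Tx(0)=\sum_i p_i x_i(0)$ with $p^T=C(rI-A)^{-1}$. This is the routine step.

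The core of the proof is the LP characterization. I would first record that, by \eqref{eq:epi}, $A$ is Metzler, so $rI-A$ is a Z-matrix (nonpositive off-diagonal entries), and since $A-rI$ is Hurwitz every eigenvalue of $rI-A$ has positive real part; hence $rI-A$ is a nonsingular M-matrix and $(rI-A)^{-1}\ge 0$ entrywise. Feasibility of the candidate $p^T=C(rI-A)^{-1}$ is then immediate: it satisfies $p^T(rI-A)=C$, so the constraint $p^TA-rp^T\le -C$ holds with equality, and $p=(rI-A)^{-T}C^T\ge 0$ because $C\ge 0$. For optimality, note that $p\ge 0$ renders $|p|_1=\mathbf 1^Tp$ linear, and that for any feasible $q$ the constraint yields $(q-p)^T(rI-A)\ge 0$; writing $w^T=(q-p)^T(rI-A)\ge 0$ gives $\mathbf 1^T(q-p)=w^T(rI-A)^{-1}\mathbf 1\ge 0$, since both $w$ and $(rI-A)^{-1}\mathbf 1$ are nonnegative. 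Thus no feasible point has smaller objective than $p$, and the minimizer of \eqref{eq:LP1} coincides with \eqref{eq:p}.

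Finally, the upper-bound claim is inherited rather than proved anew: combining the identity just derived with Lemma \ref{lem:cfup} gives $p^Tx(0)=\int_0^\infty e^{-rt}Cx(t)\,dt\ge \int_0^\infty e^{-rt}CE[X_i(t)]\,dt$, so $p$ bounds the stochastic cost-to-go whenever $x(0)=\chi(0)=E[X_i(0)]$. The main obstacle I anticipate is purely the M-matrix step: feasibility, the sign in the optimality comparison, and ultimately the usefulness of the $\ell_1$ objective all hinge on $(rI-A)^{-1}\ge 0$, so I would make sure the Hurwitz hypothesis on $A-rI$ is invoked precisely to certify the M-matrix property and not merely the convergence of the integral.
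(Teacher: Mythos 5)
Your proof is correct, but note that the paper itself does not prove this lemma at all --- it simply defers to the earlier conference paper \cite{icrapaper2019} ``for proof and discussion.'' Your argument is therefore a genuinely self-contained alternative, and it is the natural one: the integral computation $\int_0^\infty e^{(A-rI)t}dt=(rI-A)^{-1}$ handles \eqref{eq:CC}--\eqref{eq:p}, and the Metzler/M-matrix observation that $(rI-A)^{-1}\ge 0$ is exactly the structural fact that makes the LP work, both for feasibility of the candidate $p$ and for the comparison $\mathbf 1^T(q-p)=w^T(rI-A)^{-1}\mathbf 1\ge 0$ against any feasible $q$. You are also right to flag that the Hurwitz hypothesis is doing double duty (convergence of the integral and the M-matrix property); this is the point the paper leaves implicit. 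One small refinement worth adding: the lemma asserts the LP is \emph{equivalent} to the closed form \eqref{eq:p}, so you should also check the minimizer is unique. This follows from your own machinery --- since $(rI-A)^{-1}$ is nonnegative and nonsingular, every row has a positive entry, hence $(rI-A)^{-1}\mathbf 1>0$ entrywise, and $w^T(rI-A)^{-1}\mathbf 1=0$ with $w\ge 0$ forces $w=0$ and thus $q=p$. With that addition the argument fully establishes the claimed equivalence rather than only one direction.
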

\begin{proof}
See \cite{icrapaper2019} for proof and discussion. 
\end{proof}

\begin{corollary}
The constraints on the cost function of the linear model \eqref{eq:cost2} and the linear cost-to-go \eqref{eq:CC} are feasible if and only if $A-rI$ is Hurwitz-stable, i.e. 
$\lambda_{max}(A) \leq r$. Therefore minimizing the discount rate $r$ is an upperbound for minimizing the spectral radius 
\begin{equation}
\label{eq:spec}
   J_{\lambda}=\lambda_{max}(A).
\end{equation}
which is common in literature (e.g. \cite{VanMieghem:2009,van2011decreasing,Preciado2014,Nowzari2017,Youssef2011}).
\end{corollary}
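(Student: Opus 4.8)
The plan is to exploit that the state matrix $A$ in \eqref{eq:epi} is Metzler (its off-diagonal entries $\beta_{ij}$ are nonnegative), so the corollary reduces to standard spectral and linear-programming characterizations of Hurwitz stability for positive systems \cite{berman1994nonnegative,rantzer2015scalable}. Much of the work is already done by Lemma \ref{lem:p}: whenever $A-rI$ is Hurwitz, the cost-to-go admits the closed form $p^{T}=C(rI-A)^{-1}$ and the equivalent LP \eqref{eq:LP1}. So the first step is to record the \emph{if} direction. If $A-rI$ is Hurwitz then $rI-A$ is a nonsingular M-matrix, hence $(rI-A)^{-1}\ge 0$ entrywise and $p\ge 0$; moreover $e^{-rt}x(t)=e^{(A-rI)t}x(0)\to 0$, so the integral \eqref{eq:cost2} converges and the constraints are feasible.

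For the converse and the eigenvalue form, I would invoke Perron--Frobenius theory for Metzler matrices \cite{berman1994nonnegative}: the spectral abscissa $\lambda_{max}(A)=\max_i \mathrm{Re}(\lambda_i(A))$ is attained by a real eigenvalue of $A$. Since the eigenvalues of $A-rI$ are precisely $\lambda_i(A)-r$, the matrix $A-rI$ is Hurwitz if and only if $\lambda_{max}(A)<r$. The LP side is then closed by the linear (copositive) Lyapunov certificate for positive systems \cite{rantzer2015scalable,briat2013robust}: for the Metzler matrix $M=A-rI$ there exists $p\ge 0$ with $p^{T}M\le -C\le 0$ if and only if $M$ is Hurwitz. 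Combining these, feasibility of the constraints associated with \eqref{eq:cost2} and \eqref{eq:CC} is equivalent to $A-rI$ being Hurwitz, i.e. to $\lambda_{max}(A)<r$.

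Finally, for the \emph{therefore} clause, I would note that the feasible set of discount rates is $\{r : r>\lambda_{max}(A)\}$, whose infimum is exactly $\lambda_{max}(A)$. Hence minimizing $r$ subject to feasibility drives $r$ toward $\lambda_{max}(A)$ from above, so every feasible $r$ upper-bounds the spectral-abscissa objective $J_\lambda$ in \eqref{eq:spec}, with equality only in the limit $r\downarrow\lambda_{max}(A)$; this reconciles the non-strict $\le$ written in the statement (the boundary being an infimum rather than an attained minimum).

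The step I expect to be the main obstacle is the rigorous justification of the LP-feasibility $\Leftrightarrow$ Hurwitz equivalence, i.e. the copositive linear Lyapunov certificate for Metzler matrices, together with the care needed around strict versus non-strict inequality. The remaining ingredients---eigenvalue shifting and the nonnegativity of $(rI-A)^{-1}$ from the M-matrix property---are routine consequences of the Metzler structure.
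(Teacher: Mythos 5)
Your proposal is correct in outline, and it is worth noting that the paper itself offers \emph{no} proof of this corollary: it is stated as an immediate consequence of Lemma \ref{lem:p}, whose own proof is deferred to \cite{icrapaper2019}. So your argument is a genuine reconstruction rather than a paraphrase. The route you take --- Metzler structure of $A$ from \eqref{eq:epi}, the M-matrix property giving $(rI-A)^{-1}\ge 0$ and hence $p\ge 0$ for the ``if'' direction, Perron--Frobenius to identify the spectral abscissa with a real eigenvalue so that $A-rI$ Hurwitz $\Leftrightarrow \lambda_{max}(A)<r$, and the linear copositive Lyapunov certificate for the LP direction --- is exactly the standard positive-systems machinery the paper leans on implicitly via \cite{berman1994nonnegative,rantzer2015scalable}, and your handling of the ``therefore'' clause (feasible set $\{r:r>\lambda_{max}(A)\}$ with infimum $\lambda_{max}(A)$, explaining the non-strict $\le$ in the statement as an unattained infimum) is the right reading.

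The one place where your argument as written has a genuine gap is the ``only if'' direction of the claimed equivalence ``there exists $p\ge 0$ with $p^{T}(A-rI)\le -C$ iff $A-rI$ is Hurwitz.'' The paper only assumes $c_i\ge 0$, and if some (or all) $c_i=0$ the LP can be feasible without $A-rI$ being Hurwitz --- e.g.\ $C=0$ admits $p=0$ for any $A$ and $r$. The standard certificate requires strict positivity: a Metzler matrix $M$ is Hurwitz iff there exists $p>0$ with $p^{T}M<0$, and converting that into the non-strict form used here needs either $C>0$ elementwise or an irreducibility assumption on the graph so that feasibility of $p^{T}M\le -C$ with $C\ge 0$, $C\ne 0$ propagates positivity to all components of $p$. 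You correctly anticipated this as the main obstacle, but the fix (add the hypothesis $C>0$, or argue via the convergence of the integral in \eqref{eq:cost2}, which forces $e^{(A-rI)t}\to 0$ along the relevant modes and is arguably the more direct reading of ``the constraints on the cost function are feasible'') should be stated explicitly rather than left as an expected difficulty.
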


\begin{remark}
To obtain the cost-to-go $p$ different cost functions are possible. For example we can use the total number of removed nodes. We can find the total number of removed nodes by taking the integral of \eqref{eq:R} over the interval $[0,T]$. Assuming there are no removed nodes at the start, i.e., $z(0)=0$, this results in
\begin{equation}
\label{eq:RiskR}
z(T) - z(0)= \int_{0}^{T} \dot{z} dt =\int_{0}^{T} \delta_{i}x_{i}(t) dt.
\end{equation}
Including the option for the discount rate $r$, i.e. put a higher cost on nodes that are removed earlier, and using the total number of removed nodes as the cost, we obtain
\begin{equation}
\label{eq:R2}
   \int_{0}^\infty e^{-rt} \delta x(t)dt = \sum_{i=1}^n p_i x_i(0).
\end{equation}
\end{remark}

\section{SURVEILLANCE MODEL}
The risk $R_i$ at each node $i$ is defined as the product of the probability of at least one outbreak occurring within a time interval and the resulting cost of that outbreak \eqref{eq:cost1}. We now want to bound the risk and minimize the cost of an undetected outbreak by use of a revisit schedule. That is, we want to find the largest revisit interval $\tau_i$ for each node $i$, such that the risk of an undetected outbreak remains bounded by some $R_{\text{max}}>0$. We take the assumption that at each revisit we can sense the state of the system, whereas in between visits we receive no information. That is, at time of visiting the risk of an undetected outbreak is reduced back to a small number $\epsilon_R$ for that node. 

The main theoretical result is
\begin{theorem}
\label{th:SP}
Suppose that $A-rI$ is Hurwitz-stable, the risk of an undetected outbreak remains bounded by $R_{\text{max}}$ for any revisit interval $\tau_i= t_{i}(k+1) - t_{i}(k)$ satisfying 
\begin{equation}
\label{eq:bound}
    \tau_i=t_{i}(k+1) - t_{i}(k) \leq \frac{R_{\text{max}}}{p_{i}\lambda_{i}+\epsilon_R}
\end{equation}
where $t_{i}(k)$ is the kth time node $i$ is visited, $\lambda_i$ the Poisson rate of an outbreak occurring at node $i$ and $p$ the node impact. 
\end{theorem}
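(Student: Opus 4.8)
The plan is to write the node risk $R_i$ as an explicit, monotonically increasing function of the elapsed time $\tau$ since node $i$ was last visited, and then invert the requirement $R_i\le R_{\max}$ to read off the largest admissible revisit interval. Two ingredients feed into this risk function. First, the outbreak probability: because outbreaks at node $i$ form a Poisson process of rate $\lambda_i$, the probability of at least one outbreak in a window of length $\tau$ is $1-P\{N(\tau)=0\}=1-e^{-\lambda_i\tau}$, read directly from the Poisson mass function of Section II with $m=0$. Second, the impact of such an outbreak: by Lemma~\ref{lem:p} the discounted cost-to-go of an outbreak seeded at node $i$ equals the node impact $p_i$ of \eqref{eq:p}, and this is exactly where the hypothesis that $A-rI$ is Hurwitz-stable is used, since it guarantees $p_i$ is finite and hence that the risk is well defined. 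Multiplying the outbreak probability by the impact $p_i$ and adding the residual risk left after a measurement (governed by $\epsilon_R$) gives $R_i$ as a function of $\tau$.

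First I would bound the Poisson factor with the elementary inequality $1-e^{-\lambda_i\tau}\le \lambda_i\tau$, valid for all $\tau\ge 0$ since $e^{-x}\ge 1-x$. This replaces the true saturating risk by an affine-in-$\tau$ overestimate whose slope is the effective risk rate $p_i\lambda_i+\epsilon_R$; because the substitution only enlarges the risk, any $\tau_i$ that satisfies the linearized constraint a fortiori satisfies the exact one. Next I would invoke monotonicity: both the outbreak probability $1-e^{-\lambda_i\tau}$ and the accumulated residual are nondecreasing in $\tau$, so the risk over $[t_i(k),t_i(k+1)]$ is largest at the right endpoint $\tau=\tau_i$. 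It therefore suffices to enforce the bound there, namely $(p_i\lambda_i+\epsilon_R)\,\tau_i\le R_{\max}$, and solving this single scalar inequality for $\tau_i$ gives precisely \eqref{eq:bound}. Since the risk is reset at every visit, the same argument applies verbatim on each inter-visit interval, so the bound certifies the whole schedule.

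I expect the main obstacle to lie not in the inequalities but in fixing the risk model cleanly, in particular in two points. One must argue that an outbreak igniting at an arbitrary instant inside the window contributes at most the precomputed cost-to-go $p_i$, so that the integral of the discounted cost against the Poisson event-time density is dominated by $p_i$ times the outbreak probability rather than requiring a time-dependent cost. And one must justify exactly how $\epsilon_R$ enters---as the floor to which the risk is driven at each measurement together with its growth across the window---so that it contributes the additive $\epsilon_R$ term in the denominator of \eqref{eq:bound}. Once the risk function is committed to, the remaining steps ($1-e^{-x}\le x$, monotonicity, and the scalar inversion) are routine.
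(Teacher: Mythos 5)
Your proposal follows essentially the same route as the paper: the paper's proof consists of exactly your two main ingredients, namely the elementary Poisson bound $1-e^{-\lambda_i\tau}\le\lambda_i\tau$ (stated there as a separate lemma) and multiplication by the cost-to-go $p_i$ from Lemma~\ref{lem:p}, followed by the scalar inversion to obtain \eqref{eq:bound}. If anything you are more careful than the paper, which never explicitly accounts for the $\epsilon_R$ term in its displayed bound $R_i(t)\le p_i\lambda_i\tau_i\le R_{\text{max}}$ --- the point you flag as needing justification is indeed left implicit in the original proof.
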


In order to prove this theorem, we first proof the following lemma. 
\begin{lemma} 
\label{lem:bda}
For any interval $t$ the probability of at least one event occurring within this time interval has a linear upperbound $\lambda_i t$. 
\end{lemma}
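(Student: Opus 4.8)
The plan is to reduce the statement to an elementary inequality about the exponential function by computing the exact probability of at least one event from the Poisson law stated in the model definition, and then bounding that probability linearly.

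First I would invoke the third defining property of the Poisson process, which states that the number of events $N(t)$ in any interval of length $t$ is Poisson distributed with mean $\lambda_i t$. Evaluating the probability mass function at $m=0$ gives the probability of \emph{no} event in the interval, $P\{N(t)=0\} = e^{-\lambda_i t}$. Since ``at least one event'' is the complement of ``no event,'' the quantity I need to bound is exactly $P\{N(t)\ge 1\} = 1 - e^{-\lambda_i t}$.

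Second, I would establish the standard inequality $1 - e^{-x} \le x$ for all $x \ge 0$. The cleanest route is to set $g(x) = x - 1 + e^{-x}$, observe that $g(0)=0$, and note that $g'(x) = 1 - e^{-x} \ge 0$ for $x \ge 0$, so $g$ is nondecreasing and hence $g(x)\ge 0$ on $[0,\infty)$; equivalently, this is just the statement that the convex function $e^{-x}$ lies above its tangent line $1-x$ at the origin. Substituting $x = \lambda_i t \ge 0$ then yields $1 - e^{-\lambda_i t} \le \lambda_i t$, and combining with the previous step gives $P\{N(t)\ge 1\} \le \lambda_i t$, which is the claimed linear upper bound.

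I do not expect a genuine obstacle here: the result is elementary once the probability of at least one event is written as $1 - e^{-\lambda_i t}$. The only point requiring care is conceptual rather than technical, namely not to conflate the probability of at least one event with the expected count $\lambda_i t$ itself; the two agree only to first order, and it is precisely the convexity inequality above that certifies the mean count is a valid overestimate of the ``at least one'' probability. This linear bound is exactly the ingredient needed to produce the term $p_i\lambda_i$ appearing in the risk bound of Theorem~\ref{th:SP}.
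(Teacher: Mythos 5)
Your proof is correct and follows the same route as the paper: both compute $P\{N(t)\ge 1\} = 1 - e^{-\lambda_i t}$ as the complement of the zero-event probability and then apply the bound $1-e^{-x}\le x$. The only difference is that you justify this elementary inequality explicitly (via monotonicity of $x-1+e^{-x}$), whereas the paper simply asserts it.
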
 
\begin{proof}
 The probability of at least one event occurring within time interval $t$ is a consequence of the Poisson definition and given by
\begin{align}
 P\left\{ N(t+s) - N(s) >0 \right\}&=1 - P\{N(t+s)-N(s)=0\} \nonumber\\
& = 1-e^{-\lambda t} \leq \lambda t.  
\end{align} 
\end{proof}
 For a small interval, this is a tight upper bound. 

\begin{proofie}
Using Lemma \ref{lem:bda} we can upperbound the probability of an outbreak occurring at a node $i$ by $\lambda_{i}t$, i.e the estimated outbreak probability over the interval $t=t_i(k+1)-t_{i}(k)$, where $t_{i}(k)$ is the kth time node $i$ is visited, is
\begin{equation}
\hat{x}_{i}(t)=\lambda_{i}(t_i(k+1)-t_{i}(k)).
\end{equation}
The node impact or cost-to-go $p_{i}$ is given in Lemma \ref{lem:p}. Therefore the risk at each node is bounded by
\begin{equation}
\label{eq:Rlamb}
R_{i}(t) \leq p_{i}\lambda_{i}\left(t_i(k+1)-t_{i}(k)\right) \leq R_{\text{max}} .
\end{equation}
\end{proofie}
\begin{remark}
For epidemics, revisiting can be interpreted as testing. Hence, each time that a node gets tested the risk of an undetected outbreak is reduced. 
\end{remark}

\section{RESOURCE ALLOCATION MODEL}
\label{sec:RA}
In the previous section, we discussed how to bound the risk of an undetected outbreak by use of a revisit schedule. In this section, we consider allocating resources to reduce the risk and associated surveillance workload needed. Resources can be allocated to modify the spreading and recovery rates $\beta_{ij}$ and $\delta_{i}$, the outbreak Poisson rate $\lambda_i$ and the revisit interval $\tau_i$. We assume bounded range of possible values $0 < \underline{\beta}_{ij} \leq \beta_{ij} \leq \overline{\beta}_{ij}$, $0 < \underline{\delta}_{i} \leq \delta_{i} \leq \overline{\delta}_{i}<\overline{\Delta}$, $0 < \underline{\lambda}_{i} \leq \lambda_{i} \leq \overline{\lambda}_{i}$ and $0 < \underline{\tau}_{i} \leq \tau_{i} \leq \overline{\tau}_{i}$. The associated resource model is similar to what we proposed in \cite{LCSS2021},
\begin{align}
&f_{ij}\left(\beta_{ij}\right)=w_{ij}\text{log}\left(\frac{\overline{\beta}_{ij}}{\beta_{ij}}\right),\  g_{i}\left(\overline{\Delta}-\delta_{i}\right)=w_{ii}\text{log}\left(\frac{\overline{\Delta}-\underline{\delta}_{i}}{\overline{\Delta}-\delta_{i}}\right) \nonumber\\
&h_{i}\left(\lambda_{i}\right)=\omega_{\lambda_i}\text{log}\left(\frac{\overline{\lambda}_{i}}{\lambda_{i}}\right), \  \psi_{i}\left(\tau_{i}\right)=\omega_{\tau_i}\text{log}\left(\frac{\overline{\tau}_{i}}{\tau_{i}}\right)  \label{eq:RM}
\end{align}
where $w_{ij}$, $w_{ii}$, $\omega_{\lambda_i}$ and $\omega_{\tau_i}$ are weightings expressing the cost of respectively reducing $\beta_{ij}$, increasing $\delta_{i}$ and reducing $\lambda_i$ and $\tau_i$. 

This type of logarithmic resource allocation was discussed in \cite{LCSS2021} for $\beta$ and $\delta$. These resource models can be understood as that a certain proportional increase/decrease always has the same cost. Note that it is impossible for these to be reduced to $0$ since that would take infinite resources. For the revisit interval $\tau$ this can be thought of as that it is impossible to constantly monitor each node $i$, hence there will always be a nonzero interval rate between measurements. Furthermore these resource models encourage sparsity as detailed further in Section \ref{subsec:L0}.

\section{CONVEX OPTIMIZATION}
\label{sec:CO}
In this section we set up the optimization framework and demonstrate how we can reformulate our constraints and objectives to obtain a convex optimization problem. In total there are six different quantities we may want to minimize or bound: risk, resources on spreading rate, recovery rate and outbreak rate, discount rate and surveillance workload, i.e. frequency of revisits.

We can study multiple problems, where the goal is to either minimize the risk or keep the risk bounded, while minimizing resources.  These different objectives or costs, are summarized in Table \ref{tab:cost}. 

\begin{table}
\caption{Spreading process quantities and their convex representation.}
\label{tab:cost}
\centering
\begin{tabular}{ c|c | c} 
 Objective & Model Variable  & Convex Representation  \\ 
\hline
\multirow{2}{*} {Risk} & \multirow{2}{*} {$R_i=p_i \lambda_i \tau_i$} & $y_i + \text{log}(\overline{\lambda}_i) - \frac{z_i}{\omega_{\lambda_i}} $ \\
& & $+ \text{log}(\overline{\tau}_i) - \frac{\sigma_i}{\omega_{\tau_i}}$ \\
Spreading rate &  $ \sum_{ij} f_{ij}\left( \beta_{ij}\right )$ & $\sum_{ij} u_{ij}$ \\
Recovery rate &  $\sum_{i} g_{i}\left(\delta_{i}\right)$ & $\sum_{i} v_{i}$\\
Outbreak rate & $\sum_{i} h_{i}\left(\lambda_{i}\right)$ & $\sum_{i} z_i$\\
Surveillance workload &  $\sum_{i}\psi_i \left(\tau_i\right)$ & $\sum_{i} \sigma_i$ \\
Spectral radius bound & $r$ & $\text{log}(e^\rho-\overline{\Delta})$ \\
\end{tabular}
\end{table}

The constraints the optimization will have to adhere to are summarized in Table \ref{tab:const} and consist of the bounds on the different variables and the dynamic coupling constraint, derived from \eqref{eq:p}. All objectives and constraints can now be reformulated as convex constraints under the following logarithmic transformations; 
\begin{align}
&y_i=\text{log}(p_i), \quad u_{ij}=f_{ij}(\beta_{ij}), \quad v_i=g_i(\overline{\Delta}-\delta_i), \nonumber \\
&z_i=h_i(\lambda_i), \quad \sigma_i= \psi_i(\tau_i), \quad \rho = \text{log}(\overline{\Delta}+r), \label{eq:Convex}
\end{align}
as shown in the right column of Tables \ref{tab:cost} and \ref{tab:const}.

\begin{lemma}
The dynamic coupling constraint $p^{T}A - r p^{T} \leq - C$ is equivalent to the following convex constraint under the transformations \eqref{eq:Convex}
\begin{align}
& q(y,u,v,\rho) = \nonumber \\
&  \text{log} \Biggl (\sum_{i: (i,j)\in \mathcal E} \text{exp} \left(y_{i} + \text{log}\left(\overline{\beta}_{ij}\right) - \frac{u_{ij}}{w_{ij}} -  y_{j} -\rho \right)  \nonumber\\ 
&  + \text{exp}\left(\text{log}\left(\overline{\Delta}-\underline{\delta}_{j}\right) - \frac{v_{j}}{w_{jj}} - \rho \right) \nonumber \\
&  + \text{exp} \biggl (\text{log}\left (c_{j}\right) - y_{j} - \rho \biggr ) \Biggr) \leq 0 \quad \forall j  \label{eq:C4}.
\end{align} 
\end{lemma}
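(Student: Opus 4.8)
The plan is to prove the equivalence by direct substitution, using that the outer $\log(\cdot)\le 0$ is equivalent to $(\cdot)\le 1$ and that \eqref{eq:C4} is merely the monotone image of a single scalar inequality per node. First I would write out the $j$-th component of the original constraint $p^{T}A-rp^{T}\le -C$. By the definition of $A$ in \eqref{eq:epi}, the diagonal entry $a_{jj}=-\delta_{j}$ and the off-diagonal entries $a_{ij}=\beta_{ij}$ contribute
\[
\sum_{i:(i,j)\in\mathcal E}\beta_{ij}\,p_{i}-\delta_{j}p_{j}-r\,p_{j}\le -c_{j},
\]
so it suffices to show this holds for every $j$ if and only if \eqref{eq:C4} does.

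Next I would invert the transformations \eqref{eq:Convex} and \eqref{eq:RM} to identify each exponential in \eqref{eq:C4} with a physical quantity: by construction $e^{y_{i}}=p_{i}$, $\exp\!\left(\log\overline{\beta}_{ij}-u_{ij}/w_{ij}\right)=\beta_{ij}$, $\exp\!\left(\log(\overline{\Delta}-\underline{\delta}_{j})-v_{j}/w_{jj}\right)=\overline{\Delta}-\delta_{j}$, and $e^{\rho}=\overline{\Delta}+r$. Substituting these and factoring out $e^{-\rho}=(\overline{\Delta}+r)^{-1}$, the argument of the outer logarithm becomes
\[
\frac{1}{p_{j}(\overline{\Delta}+r)}\left(\sum_{i:(i,j)\in\mathcal E}\beta_{ij}\,p_{i}+c_{j}\right)+\frac{\overline{\Delta}-\delta_{j}}{\overline{\Delta}+r}.
\]
Replacing $\log(\cdot)\le 0$ by $(\cdot)\le 1$, multiplying through by the strictly positive quantity $p_{j}(\overline{\Delta}+r)$, and cancelling the term $\overline{\Delta}p_{j}$ that appears on both sides returns exactly the scalar inequality of the previous paragraph. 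Since every step is an equivalence, this establishes the claim for each $j$.

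The step that requires genuine care---and that motivates the otherwise mysterious shift by $\overline{\Delta}$---is making the log-sum-exp reformulation admissible. The geometric-programming machinery needs every exponentiated term to be positive, whereas the diagonal of $A$ is $-\delta_{j}<0$. Writing $-\delta_{j}=(\overline{\Delta}-\delta_{j})-\overline{\Delta}$ and folding the discount rate into $\rho=\log(\overline{\Delta}+r)$ replaces the negative diagonal by the positive quantity $\overline{\Delta}-\delta_{j}$, which is strictly positive because $\overline{\Delta}>\overline{\delta}_{j}\ge\delta_{j}$ by the bounds assumed in Section \ref{sec:RA}. The price is the spurious additive $\overline{\Delta}p_{j}$, which, as noted, cancels after clearing denominators. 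Confirming that $\overline{\Delta}-\delta_{j}>0$ and $\overline{\Delta}+r>0$ so that all logarithms are well defined is therefore the only real obstacle; the rest is routine bookkeeping.
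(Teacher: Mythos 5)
Your proposal is correct and follows essentially the same route as the paper's proof: both establish the chain of equivalences between the $j$-th row of $p^{T}A-rp^{T}\le -C$ and the posynomial-style inequality $\sum_{i\neq j}\tfrac{p_i\beta_{ij}}{p_j(\overline{\Delta}+r)}+\tfrac{\overline{\Delta}-\delta_j}{\overline{\Delta}+r}+\tfrac{c_j}{p_j(\overline{\Delta}+r)}\le 1$, differing only in that you traverse the equivalences from the convex constraint back to the original rather than forward. Your added remarks on why the shift by $\overline{\Delta}$ is needed and why all logarithms are well defined are correct and slightly more explicit than the paper's own proof.
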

\begin{proof}
We obtain \eqref{eq:C4} from \eqref{eq:p}/\eqref{eq:LP1}, which can be rewritten as $\sum^{n}_{i=1}p_{i}\left (A_{ij}-rI\right ) \leq -c_{j}$ for all $j$. Now using \eqref{eq:epi} this can be rewritten as 
\begin{equation}
\sum_{i\neq j}p_{i}\beta_{ij}-p_{j}\delta_{j}-p_{j}r \leq - c_{j},
\end{equation}
which is equivalent, when introducing $\overline{\Delta} > \overline{\delta_i}$, to 
\begin{equation}
\label{eq:posyC1}
\sum_{i \neq j} \frac{p_{i}\beta_{ij}}{p_{j}(\overline{\Delta}+r)} + \frac{\overline{\Delta}-\delta_{j}}{\overline{\Delta}+r}+ \frac{c_{j}}{p_{j}(\overline{\Delta}+r)} \leq 1 \quad \forall j.
\end{equation}
Taking the log of both sides and rewriting gives \eqref{eq:C4}
\end{proof}
 \begin{lemma}
The bounds on the spreading rate and recovery rate, respectively $0 < \underline{\beta}_{ij} \leq \beta_{ij} \leq \overline{\beta}_{ij}$ and $0 < \underline{\delta}_{i} \leq \delta_{i} \leq \overline{\delta}_{i}<\overline{\Delta}$, are equivalent to the following equations under convex transformation using $u_{ij}=f_{ij}(\beta_{ij})$ and $v_i=g_i(\overline{\Delta}-\delta_i)$ 
\begin{equation}
\label{eq:BC}
0 \leq u_{ij} \leq w_{ij}\text{log}\left (\frac{\overline{\beta}_{ij}}{ \underline{\beta}_{ij}}\right ), \quad 0 \leq v_{i} \leq w_{ii}\text{log}\left (\frac{\overline{\Delta}-\underline{\delta}_{i}}{ \overline{\Delta}-\overline{\delta}_{i}}\right )
\end{equation}
\end{lemma}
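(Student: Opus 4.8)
The plan is to treat the two bounds separately and to exploit the fact that the resource maps $f_{ij}$ and $g_i$ defined in \eqref{eq:RM} are strictly monotone, continuous bijections between the admissible rate intervals and the corresponding intervals for the transformed variables. Once this is established, each box constraint transfers directly under substitution, and the equivalence (both implications) follows from invertibility.

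First, for the spreading rate I would substitute $u_{ij}=f_{ij}(\beta_{ij})=w_{ij}\log(\overline{\beta}_{ij}/\beta_{ij})$. Since $w_{ij}>0$ and $\log$ is increasing, the map $\beta_{ij}\mapsto u_{ij}$ is continuous and strictly decreasing on $(0,\infty)$, hence invertible on its range. I would then evaluate at the two endpoints of $[\underline{\beta}_{ij},\overline{\beta}_{ij}]$: at $\beta_{ij}=\overline{\beta}_{ij}$ the argument of the logarithm is $1$, giving $u_{ij}=0$, and at $\beta_{ij}=\underline{\beta}_{ij}$ it gives $u_{ij}=w_{ij}\log(\overline{\beta}_{ij}/\underline{\beta}_{ij})$. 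Monotonicity then guarantees that $\beta_{ij}$ ranges over $[\underline{\beta}_{ij},\overline{\beta}_{ij}]$ if and only if $u_{ij}$ ranges over $[0,\,w_{ij}\log(\overline{\beta}_{ij}/\underline{\beta}_{ij})]$, which is precisely the first inequality in \eqref{eq:BC}.

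Second, for the recovery rate I would proceed identically with $v_i=g_i(\overline{\Delta}-\delta_i)=w_{ii}\log\big((\overline{\Delta}-\underline{\delta}_i)/(\overline{\Delta}-\delta_i)\big)$. The standing assumption $\overline{\delta}_i<\overline{\Delta}$ ensures the denominator stays strictly positive over the whole interval, so the map is well defined and finite. Here increasing $\delta_i$ decreases $\overline{\Delta}-\delta_i$ and hence increases the fraction, so $\delta_i\mapsto v_i$ is strictly increasing; evaluating at $\delta_i=\underline{\delta}_i$ gives $v_i=0$ and at $\delta_i=\overline{\delta}_i$ gives the upper bound $w_{ii}\log\big((\overline{\Delta}-\underline{\delta}_i)/(\overline{\Delta}-\overline{\delta}_i)\big)$, yielding the second inequality in \eqref{eq:BC}.

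The derivation is essentially routine, so the only point requiring care is confirming the sense of monotonicity in each case — the $\beta$-map is decreasing while the $\delta$-map is increasing — and verifying that the defining fractions remain strictly positive and finite on the closed intervals, which is exactly what the assumptions $0<\underline{\beta}_{ij}$ and $\overline{\delta}_i<\overline{\Delta}$ provide. Because both maps are bijections between the respective closed intervals, the equivalence of the constraints, rather than merely one implication, is immediate.
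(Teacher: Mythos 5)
Your argument is correct: the endpoint evaluation combined with the strict monotonicity of $\beta_{ij}\mapsto w_{ij}\log(\overline{\beta}_{ij}/\beta_{ij})$ (decreasing) and $\delta_i\mapsto w_{ii}\log\bigl((\overline{\Delta}-\underline{\delta}_i)/(\overline{\Delta}-\delta_i)\bigr)$ (increasing) gives the full equivalence, and your care about positivity of the arguments under $0<\underline{\beta}_{ij}$ and $\overline{\delta}_i<\overline{\Delta}$ is exactly the right check. The paper itself offers no argument here --- it only cites the derivation in \cite{LCSS2021} --- and your proof is the standard one that reference relies on, with the sole implicit assumption (consistent with the paper's resource model) that the weightings $w_{ij}$ and $w_{ii}$ are strictly positive.
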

\begin{proof}
Derived in \cite{LCSS2021}.
\end{proof}
The bounds on the outbreak rate $\lambda_i$ and interval rate $\tau_i$ can be found in the same way.

\begin{table*}
\caption{Constraints of the minimization Problem \ref{P2} and their convex representation.}
\label{tab:const}
\centering
\begin{tabular}{ c|c | c} 
 Name & Model Constraint & Convex Representation   \\ 
\hline 
Dynamic coupling constraint & $p^{T}A - r p^{T} \leq - C$ & $q(y,u,v,\rho) \leq 0 \quad \forall j$ \\
Spreading rate bounds & $0 < \underline{\beta}_{ij} \leq \beta_{ij} \leq \overline{\beta}_{ij}$ & $0 \leq u_{ij} \leq \overline{u}_{ij}= w_{ij}\text{log}\left (\frac{\overline{\beta}_{ij}}{ \underline{\beta}_{ij}}\right )$\\
Recovery rate bounds & $0 < \underline{\delta}_{i} \leq \delta_{i} \leq \overline{\delta}_{i}<\overline{\Delta}$ & $0 \leq v_{i} \leq \overline{v}_i=w_{ii}\text{log}\left (\frac{\overline{\Delta}-\underline{\delta}_{i}}{ \overline{\Delta}-\overline{\delta}_{i}}\right )$ \\
Outbreak rate bounds & $0 < \underline{\lambda}_{i} \leq \lambda_{i} \leq \overline{\lambda}_{i}$ & $0 \leq z_{i} \leq \overline{z}_i=\omega_{\lambda_i}\text{log}\left (\frac{\overline{\lambda}_{i}}{ \underline{\lambda}_{i}}\right )$ \\
Revisit interval bounds & $0 < \underline{\tau}_{i} \leq \tau_{i} \leq \overline{\tau}_{i} $ & $0 \leq \sigma_{i} \leq \overline{\sigma}_i=\omega_{\tau_i}\text{log}\left (\frac{\overline{\tau}_{i}}{ \underline{\tau}_{i}}\right )$ \\
Discount rate bounds & $0\leq r \leq \overline{R} $ & $ \text{log}(\overline{\Delta}) \leq \rho \leq  \text{log}(\overline{\Delta}+\overline{R})$ \\
\end{tabular}
\end{table*}

Using the above objectives and constraints, we can now specify the standard set-up of the optimization problem
\begin{problem}
\label{P2}
Minimize the maximum risk bound $R_i=p_i \lambda_i \tau_i$ via sparse resource allocation, given defined resource allocation budgets $\Gamma$ and a cost $c_i$ associated with each node $i$. That is, find the optimal state matrix $A$, discount rate $r$, outbreak Poisson rate $\lambda_i$ and revisit interval $\tau_i$ that minimises 
\begin{align}
    \underset{p, r, \beta, \delta, \lambda, \tau}{\text{minimize}} & \quad \underset{i}{\text{max}}(p_i \lambda_i \tau_i)  \nonumber\\
     \text{such that} & \quad p\ge 0, \quad p^{T}A - r p^{T} \leq - C, \quad 0\leq r \leq \overline{R} \nonumber\\
& \quad 0 < \underline{\beta}_{ij} \leq \beta_{ij} \leq \overline{\beta}_{ij}, \quad 0 < \underline{\delta}_{i} \leq \delta_{i} \leq \overline{\delta}_{i} < \overline{\Delta} \nonumber \\
& \quad 0 < \underline{\lambda}_{i} \leq \lambda_{i} \leq \overline{\lambda}_{i}, \quad 0 < \underline{\tau}_{i} \leq \tau_{i} \leq 1 \nonumber\\
& \quad   \sum_{ij} f_{ij}\left( \beta_{ij}\right ) \leq \Gamma_\beta, \quad \sum_{i} g_{i}\left(\delta_{i}\right) \leq \Gamma_\delta  \nonumber \\
& \quad \sum_{i} h_{i}\left(\lambda_{i}\right) \leq \Gamma_\lambda, \quad \sum_{i}\psi_i \left(\tau_i\right) \leq \Gamma_\tau. \nonumber
\end{align} 
\end{problem}

\emph{Convex Problem 1:} The equivalent convex optimization problem, more specifically exponential cone program is  
\begin{align}
    \underset{y, \rho, u, v, z, \sigma}{\text{minimize}} & \quad \underset{i}{\text{max}}(y_i + \text{log}(\overline{\lambda}_i) - \frac{z_i}{\omega_{\lambda_i}}+ \text{log}(\overline{\tau}_i) - \frac{\sigma_i}{\omega_{\tau_i}})  \nonumber\\
    \text{such that} & \quad q(y,u,v,\rho) \leq 0, \quad \text{log}(\overline{\Delta}) \leq \rho \leq  \text{log}(\overline{\Delta}+\overline{R}) \nonumber \\
& \quad 0 \leq u_{ij} \leq \overline{u}_{ij}, \quad 0 \leq v_{i} \leq \overline{v}_i  \nonumber\\
& \quad 0 \leq z_{i} \leq \overline{z}_i, \quad 0 \leq \sigma_{i} \leq \overline{\sigma}_i\nonumber \\
& \quad  \sum_{ij} u_{ij}  \leq \Gamma_\beta, \quad \sum_{i} v_{i} \leq \Gamma_\delta  \nonumber\\
& \quad \sum_{i} z_i \leq \Gamma_\lambda, \quad \sum_i \sigma_i \leq \Gamma_\tau.  \nonumber
\end{align} 



\subsection{Sparsity and Reweighted $\ell_{1}$ minimization}
\label{subsec:L0}
A significant benefit of the the exponential cone programming formulation is that it encourages sparse resource allocation. Our resource models, e.g. $\sum_{ij} u_{ij}  \leq \Gamma_\beta$, are $\ell_{1}$ norm constraints and objectives, since $u_{ij}\geq 0$, which are widely used to encourage sparsity \cite{tibshirani1996regression,candes2006robust,donoho2006compressed,Candes2008}.

If the goal is maximal sparsity, i.e. minimal number of nodes and edges with non-zero resources allocated, then we can use the reweighted $\ell_{1}$ optimization approach of \cite{Candes2008}.  We can adapt this to our problem by iteratively solving the problem, but with a reweighted resource model that approximates the number of nodes and edges with non-zero allocation, e.g.
\begin{equation}
\label{eq:L1} 
\phi^k=\sum_{ij} \frac{u^{k}_{ij}}{u_{ij}^{k-1}+\epsilon}
\end{equation} 
where $k$ is the iteration number and $\epsilon$ a small positive constant to improve numerical stability. If we use our resource model as a constraint in the optimization problem, we now replace the constraint with $\phi^k \leq M$ where $M$ is the bound on the number of nodes and edges that can have resources allocated to them. This iteration has no guarantee of convergence or global optimality, but has been found to be very effective in practice.

\section{Results}
In this section we illustrate how our proposed method can be implemented with four different examples. First of all, we look into the effect of different parameters on the risk with a small 16 node example. Next, we discuss a 7 node epidemic example to demonstrate how the optimization framework can be used for identifying critical links and deriving a vaccination strategy. Third, we discuss a larger 359 nodes epidemic example of a pandemic spreading on the airport transportation network in the US. Finally, we discuss a large wildfire example of 4000 nodes, demonstrating how we can utilize resource allocation and revisit schedules to minimize the risk of an undetected outbreak and incorporate robot path planning methods for surveillance. 

\subsection{Risk Map - 16 nodes}
Let us consider a graph with $n=16$ nodes, connected as visualized in Fig. \ref{fig:16parameters}. We now want to investigate the effect of the different parameters, i.e. cost, outbreak rate and spreading rate, on the risk. We first set all parameters the same for all nodes and links, i.e. we take $c_{i}=1$, $\beta_{ij}=0.5$ and $\lambda_i=1$ and then introduce one by one node dependent cost, spreading rate and outbreak rate. The resulting risk maps for $r=2$ are displayed in Fig. \ref{fig:16parameters}.  

\begin{figure}[!ht]
\centering
    \begin{subfigure}[b]{\linewidth}
\centering   
              \def\svgwidth{0.48\textwidth}
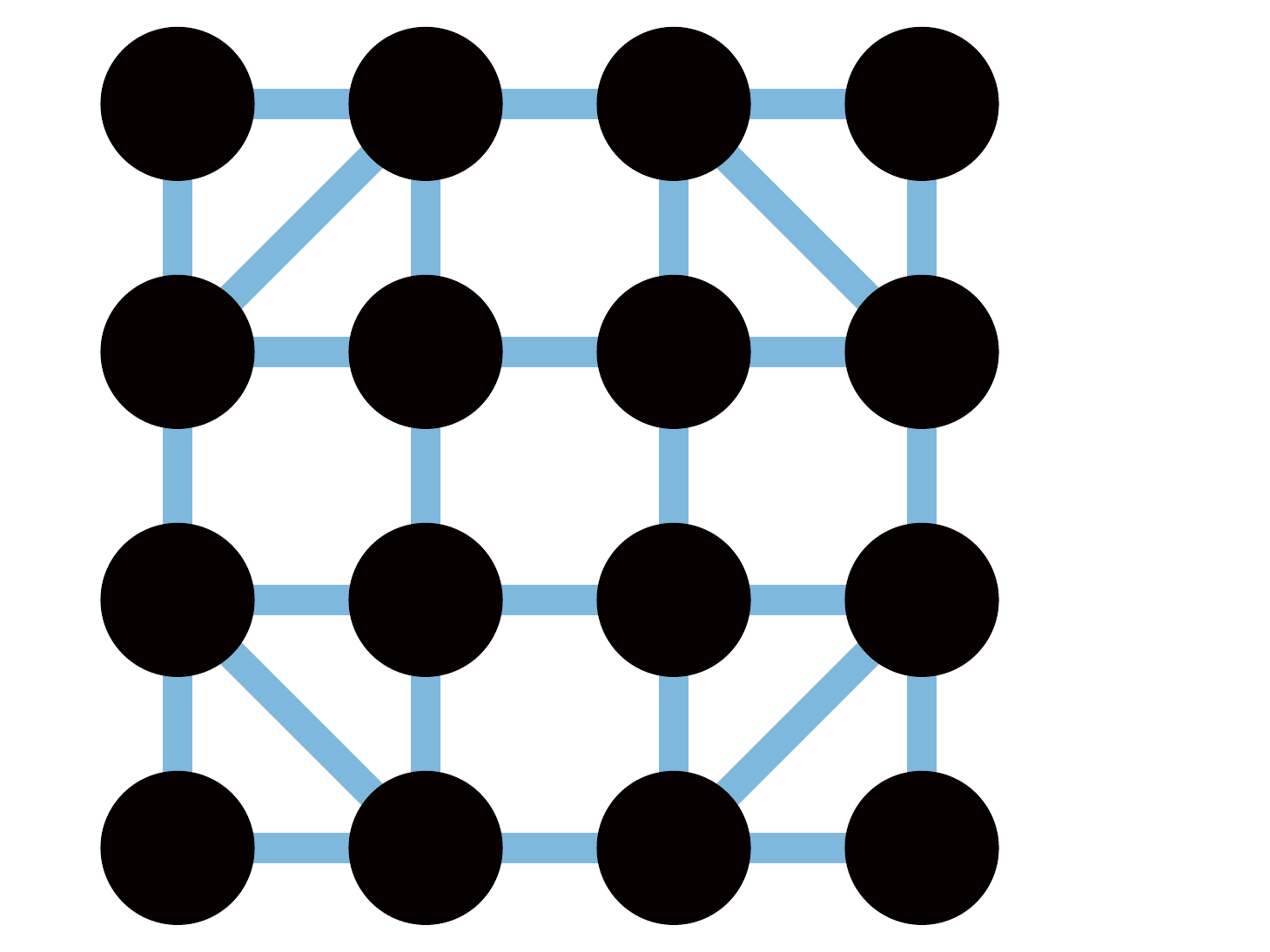     
             \hfill    
	\def\svgwidth{0.48\textwidth}
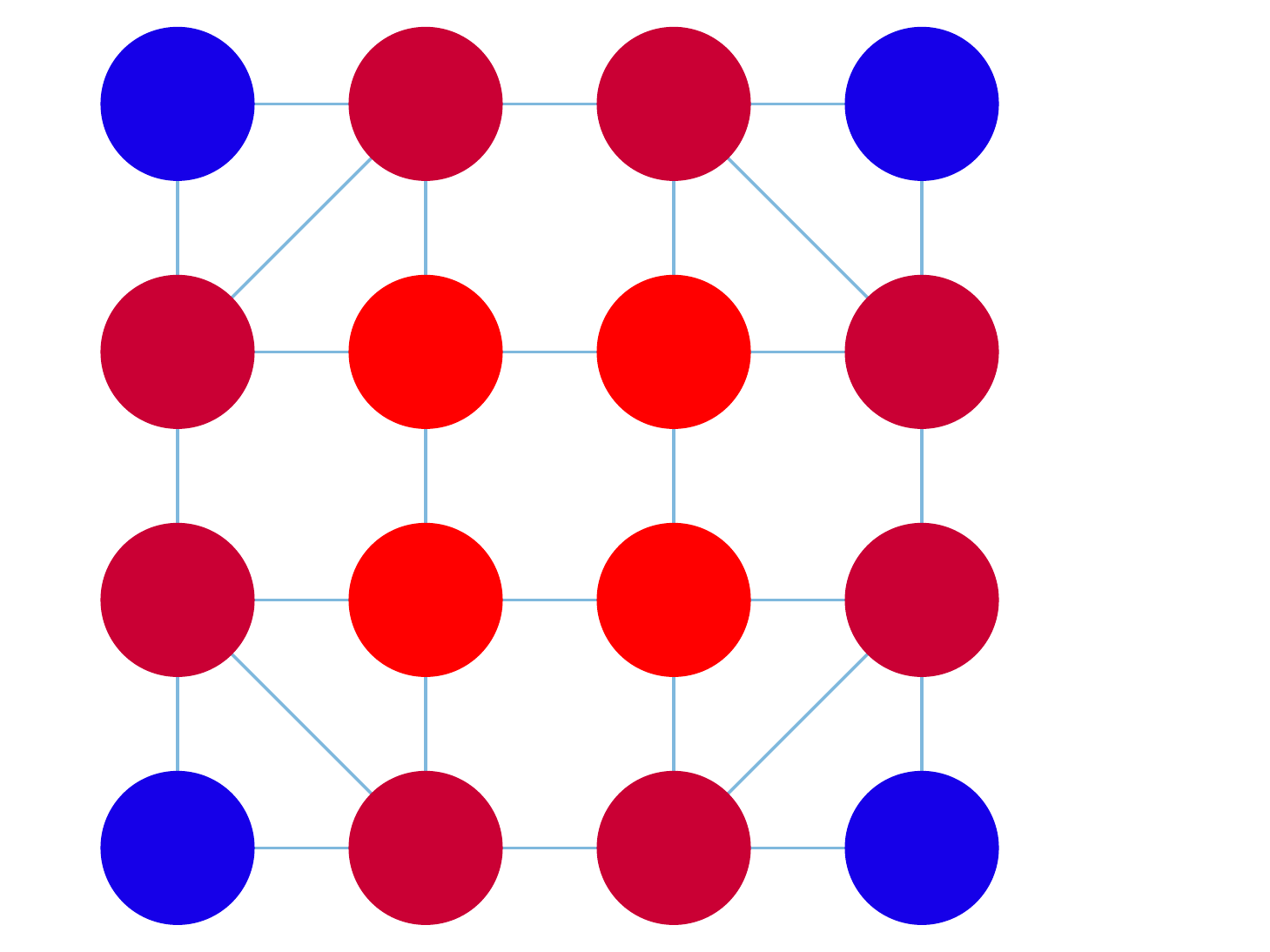
            \caption{All nodes have equal cost $c_i=1$ and outbreak rate $\lambda_i=1$, all edges have spreading rate $\beta_{ij}=0.5$}
            \label{fig:16p11}
    \end{subfigure}%
\vskip\baselineskip
    \begin{subfigure}[b]{\linewidth}    
            \centering        
 \def\svgwidth{0.48\textwidth}
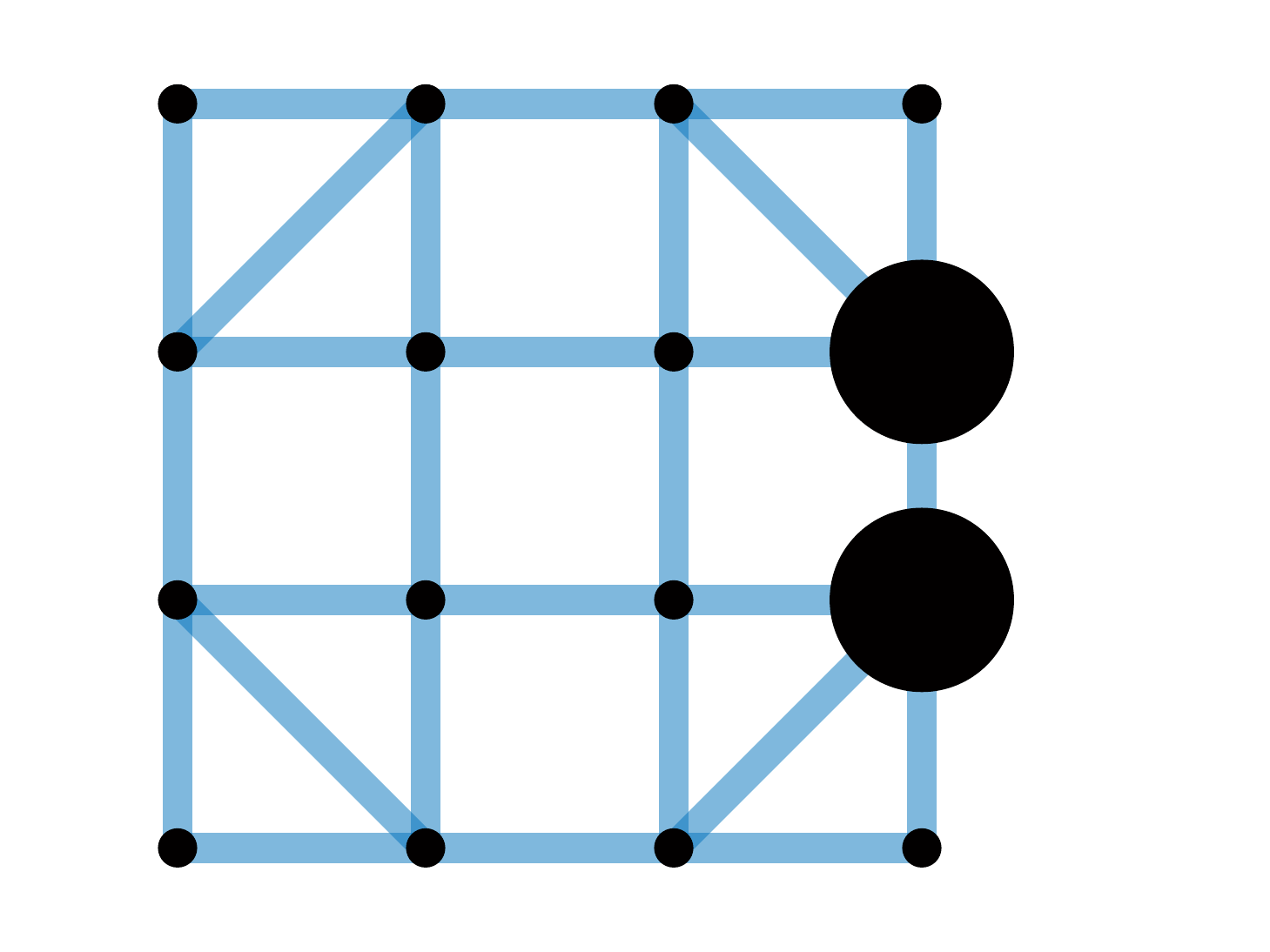     
\hfill
\def\svgwidth{0.48\textwidth}
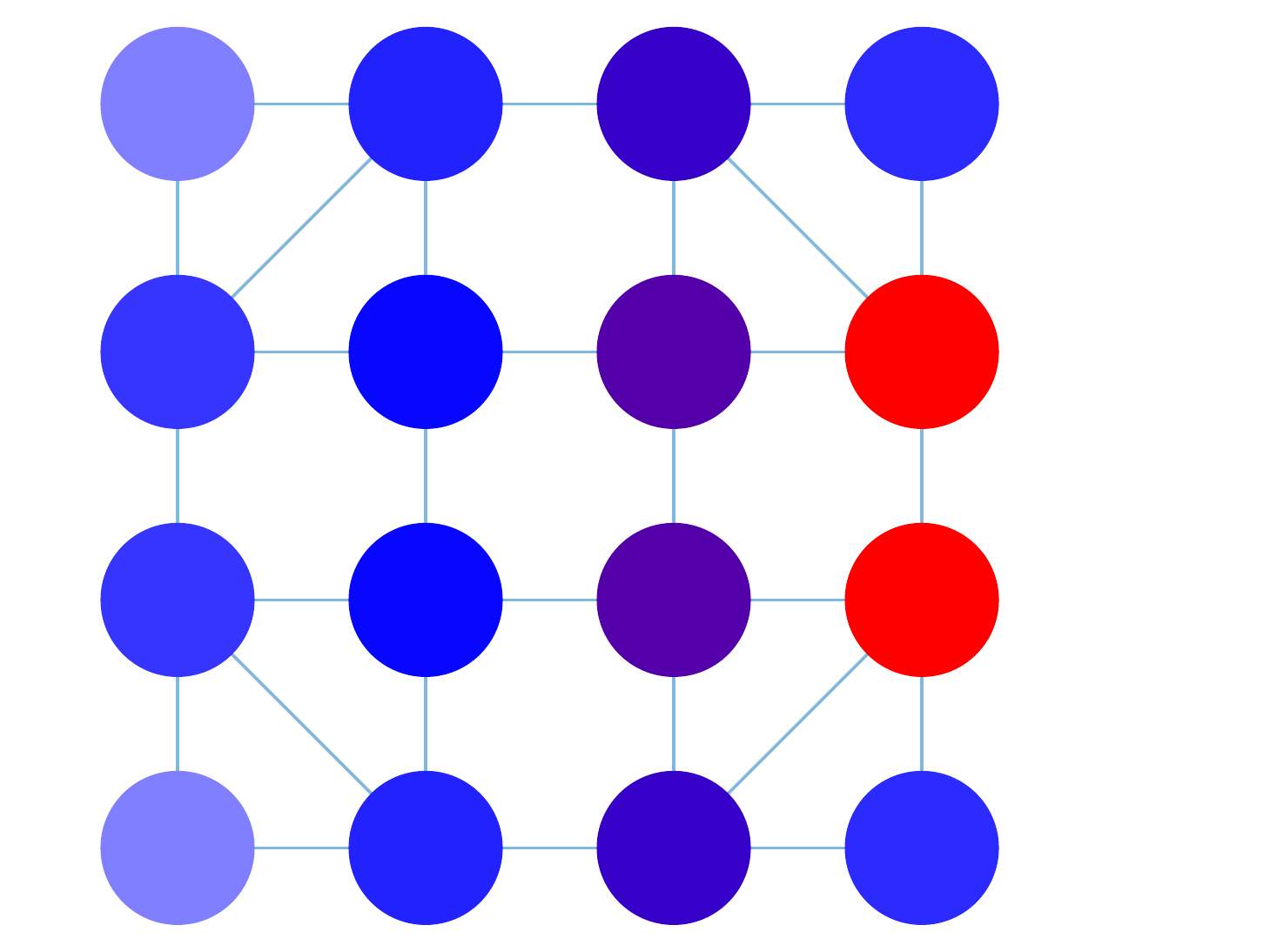
            \caption{High cost nodes $i=14,15$, equal outbreak rate $\lambda_i=1$ for all nodes and equal spreading rate $\beta_{ij}=0.5$ for all edges }
            \label{fig:16p33}
    \end{subfigure}%
\vskip\baselineskip
    \begin{subfigure}[b]{\linewidth}
            \centering
          \def\svgwidth{0.48\textwidth}
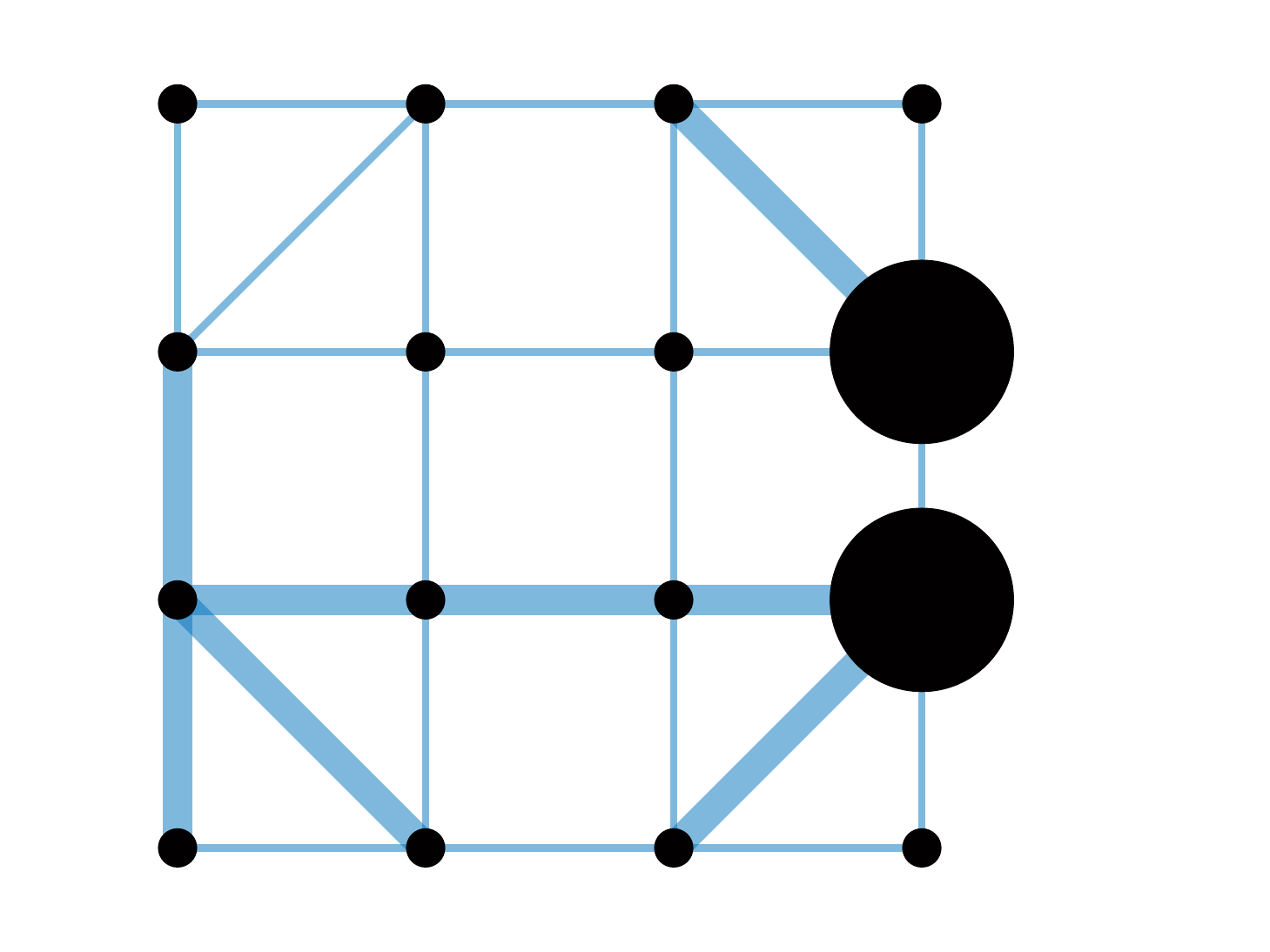     
\hfill
\def\svgwidth{0.48\textwidth}
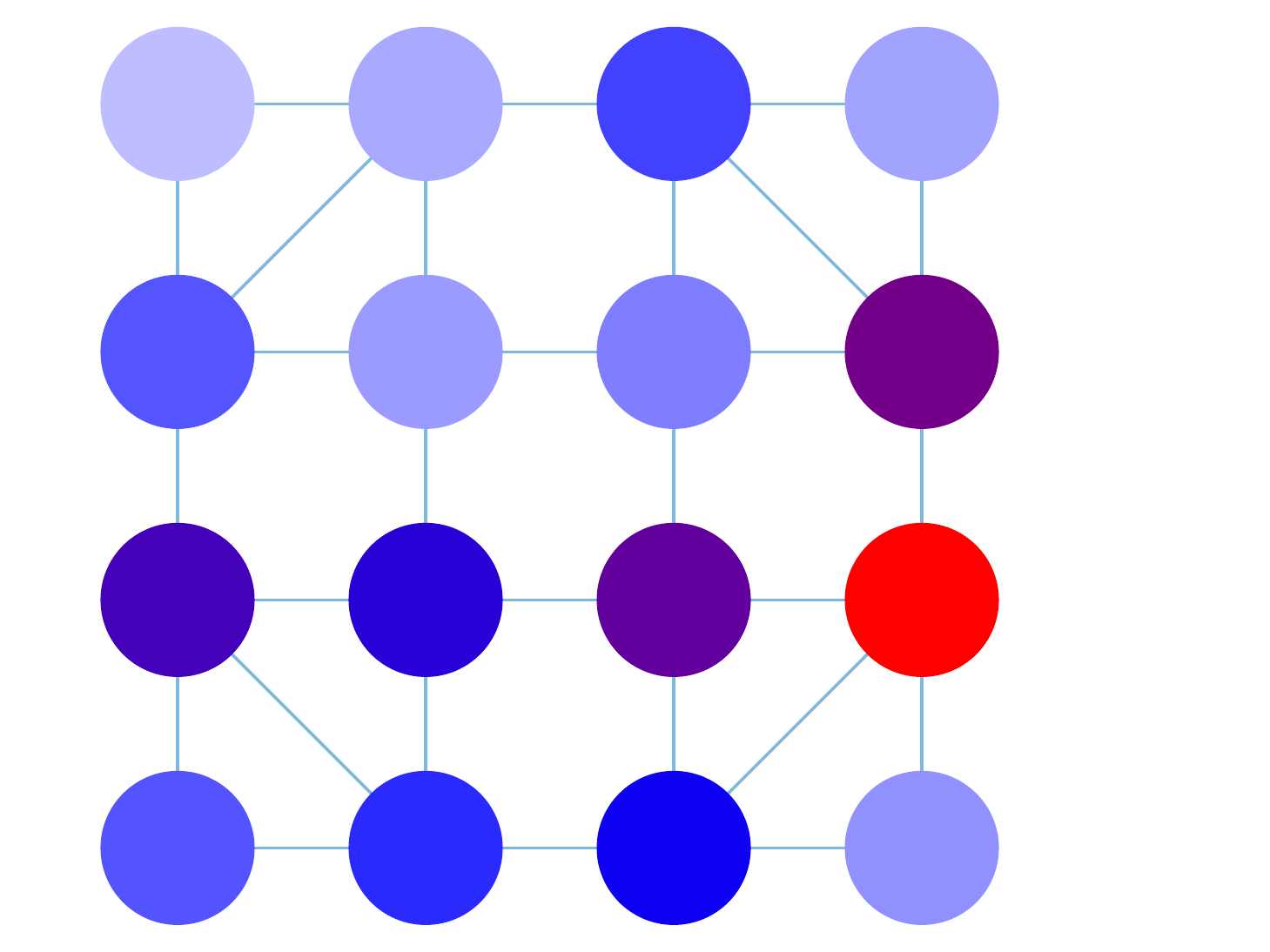
            \caption{Spreading rate as indicated by edge width on the left, high cost nodes $i=14,15$, equal outbreak rate $\lambda_i=1$}
            \label{fig:16p66}
    \end{subfigure}
\vskip\baselineskip
     \begin{subfigure}[b]{\linewidth}
            \centering
\def\svgwidth{0.48\textwidth}
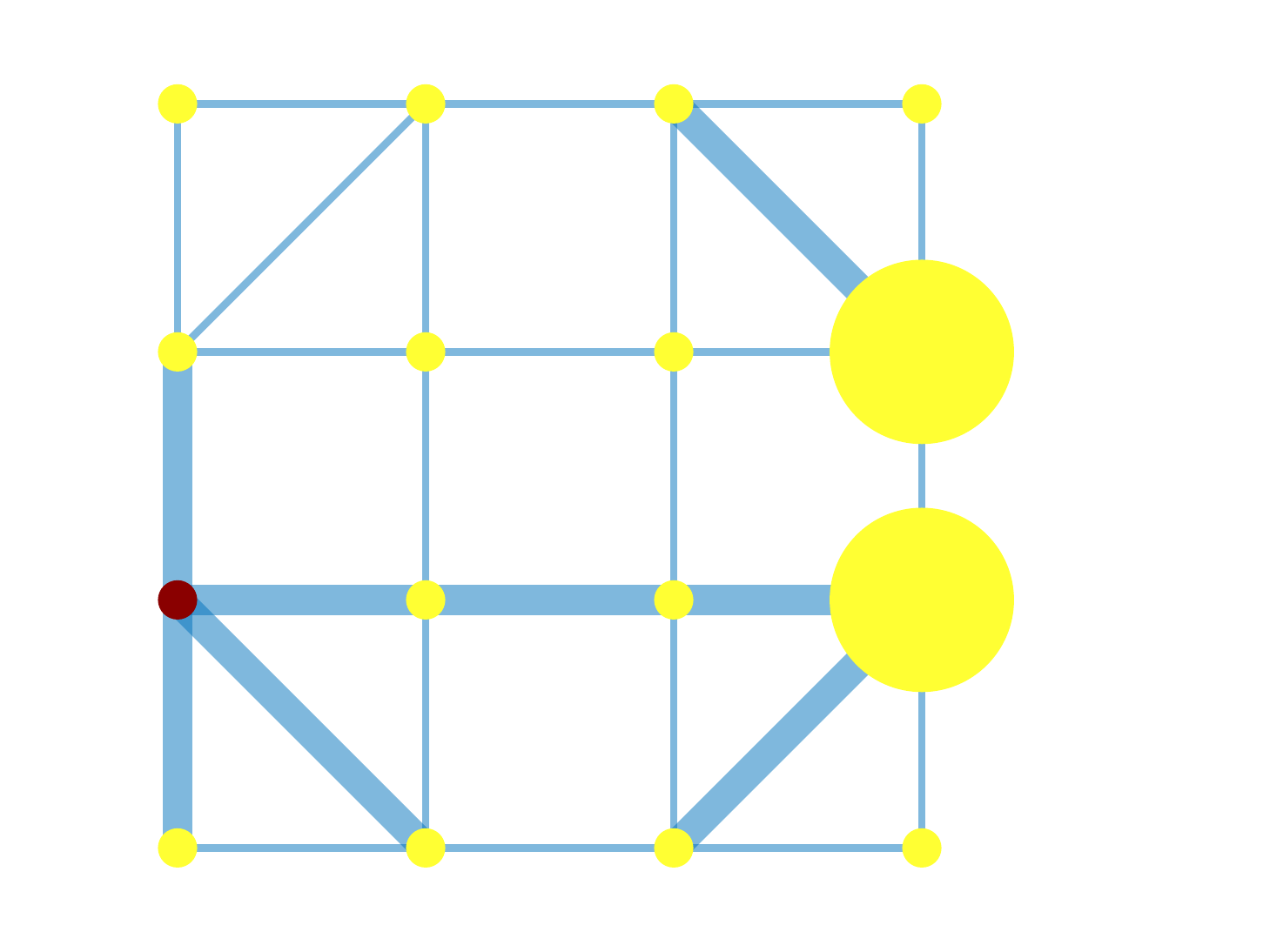     
\hfill
 \def\svgwidth{0.48\textwidth}
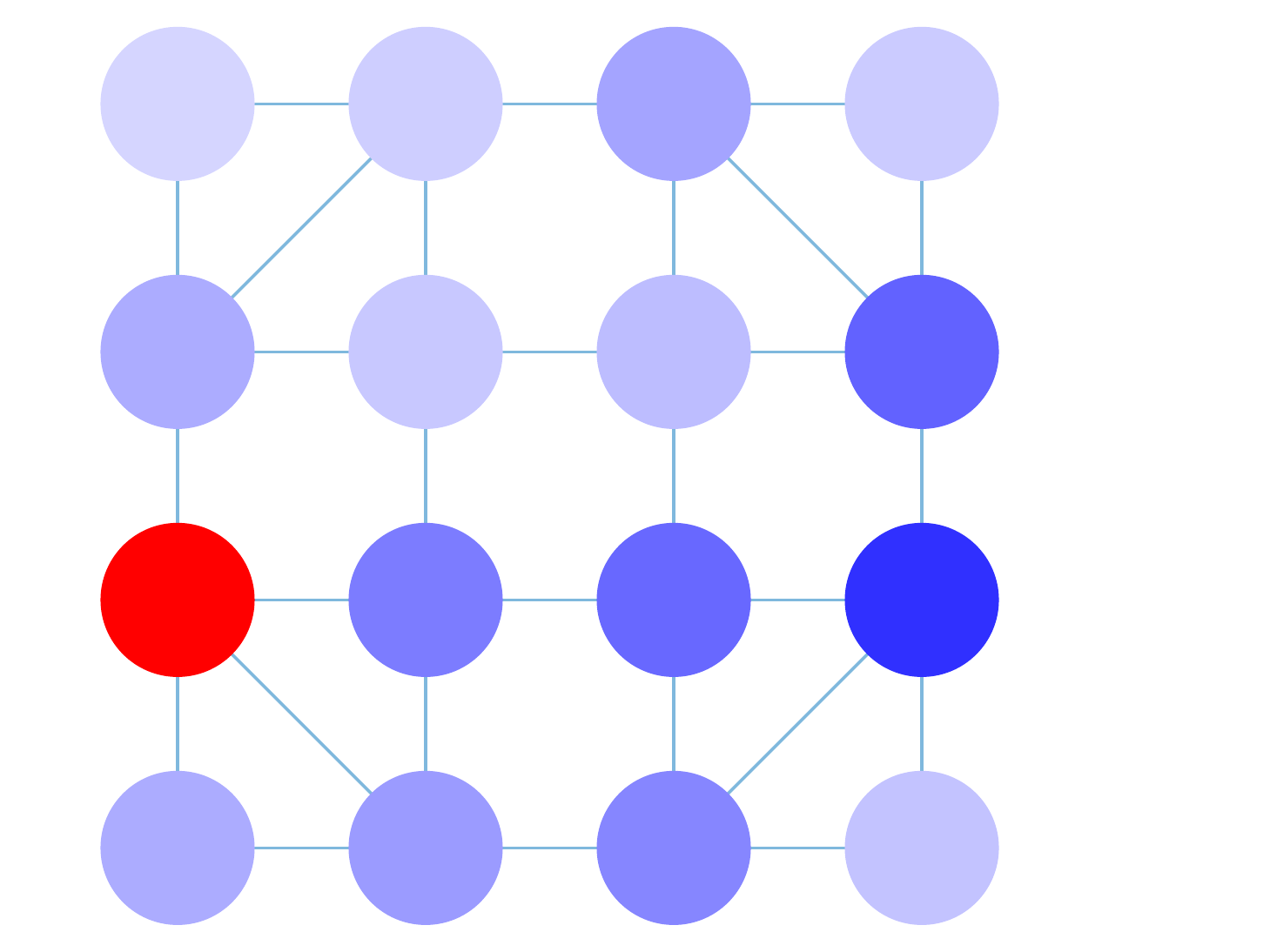
            \caption{Outbreak rate, spreading rate and cost as indicated on the left}
            \label{fig:16p88}
    \end{subfigure}
    \caption{Example 1; The effect of cost, outbreak rate and spreading rate on a graph with $n=16$ nodes. On the left is the graph where node colour indicates outbreak rate, node marker size indicates cost and edge width indicates spreading rate. The resulting risk map is displayed on the right, where node colour indicates risk.}
\label{fig:16parameters}
\end{figure}

It can be seen that uniform spreading rate, cost and outbreak rate result in risk based on node degree, see  Fig. \ref{fig:16p11}.

If we introduce a cost $c_{i}=1$ for nodes $i=14,15$ and $c_{i}=0.1$ otherwise, see Fig. \ref{fig:16p33}, we obtain high risk nodes $i=14,15$ which correspond to the high cost nodes. 

Including higher spreading rates $\beta_{ij}=0.8$ at specified edges and a lower spreading rate of $\beta_{ij}=0.2$ elsewhere, see Fig. \ref{fig:16p66}, we obtain high risk node $i=15$, which corresponds to both high cost and high spreading rate. 

Finally, we set outbreak rate as indicated by the colourbar with a high likelihood of an outbreak starting in node $i=3$. It can be seen in Fig. \ref{fig:16p88} that the high risk node $i=3$ now corresponds to the high outbreak node. 

We can visualize the effect of the discount rate for Fig. \ref{fig:16p88} by comparing the resulting risk map for a low and high discount rate, see Fig. \ref{fig:Small16dr}. Here the minimum discount rate equals $r=1.6439$ before $rI-A$ becomes singular. A higher discount rate, see Fig. \ref{fig:16highdr}, prioritizes the near future and hence, prioritizes what is happening in the high cost nodes over what possible outbreaks in other parts of the graph could lead to. 

\begin{figure}
\centering
\begin{subfigure}[b]{0.48\linewidth}
        \def\svgwidth{1\textwidth}
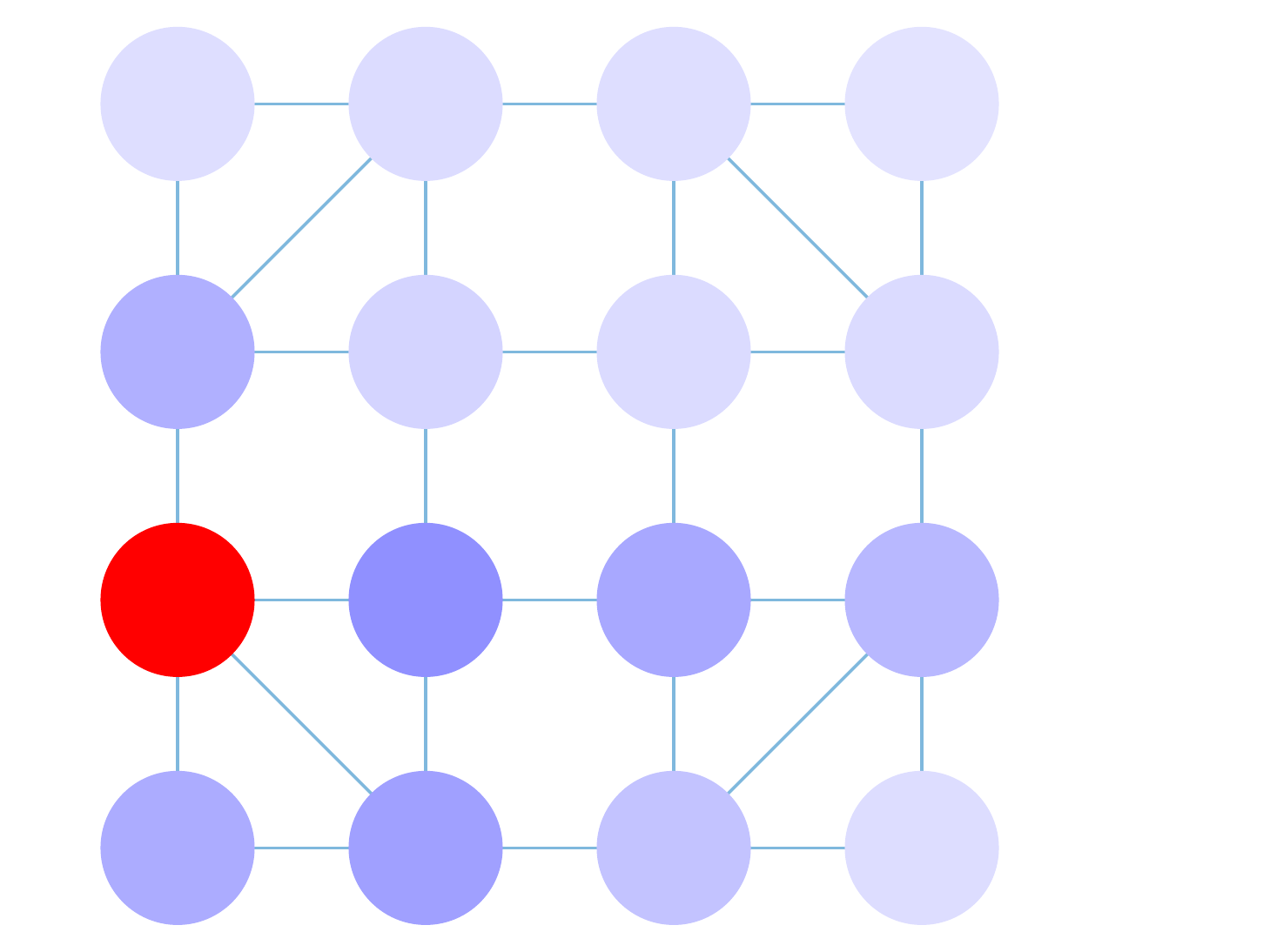
      \caption{$r=1.6439$}
      \label{fig:16lowdr}
  \end{subfigure}
  ~ 
     \begin{subfigure}[b]{0.48\linewidth}
\def\svgwidth{1\textwidth}
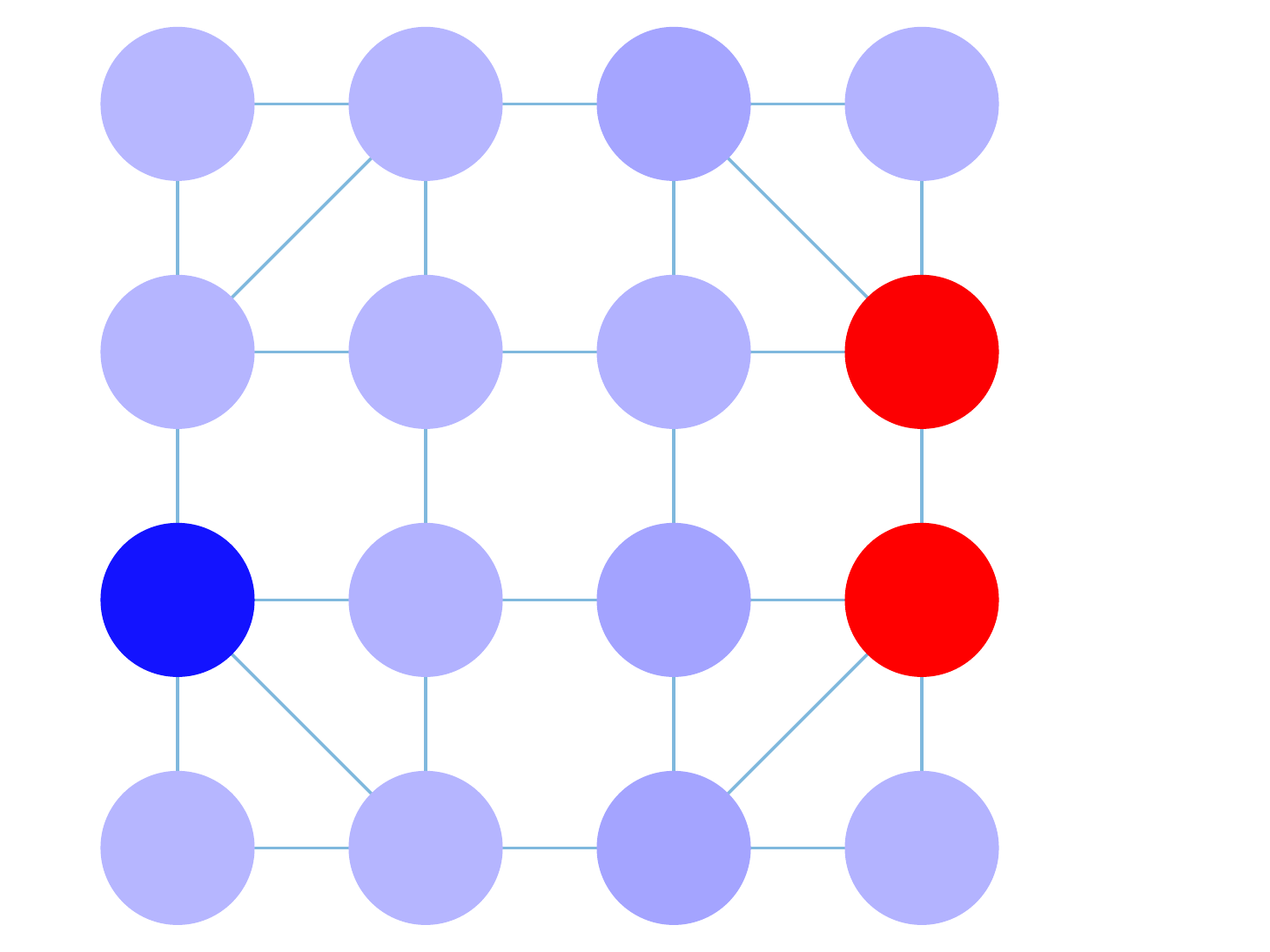
      \caption{$r=20$}
      \label{fig:16highdr}
  \end{subfigure}
      \caption{Effect of the discount rate on the risk map of Fig. \ref{fig:16p88}.}
\label{fig:Small16dr}
\end{figure}

\subsection{Epidemic Intervention - 7 nodes}
In the previous example we discussed the risk map. The next step is the optimal control problem of minimizing risk by modifying the systems dynamics as explained in Section \ref{sec:CO}. To demonstrate how the optimization framework can be utilized we take an example of an epidemic spreading on a graph with $n=7$ nodes as visualized in Fig. \ref{fig:GraphAM1}. Here, node colour indicates outbreak rate, weightings $w$ are indicated at the edges and node marker size indicates cost. In this particular example we have a vulnerable high cost node $i=7$, which could be an elderly person, which is connected to $i=6$, which can be thought of as a caretaker or nurse. This link has therefore a higher weight $w_{ij}=10$, hence a larger cost to remove it. Node $i=6$ is connected to node $i=1$ who has a high likelihood of getting the disease, e.g. due to their work situation.

\begin{figure}
\centering
 \def\svgwidth{0.85\linewidth}
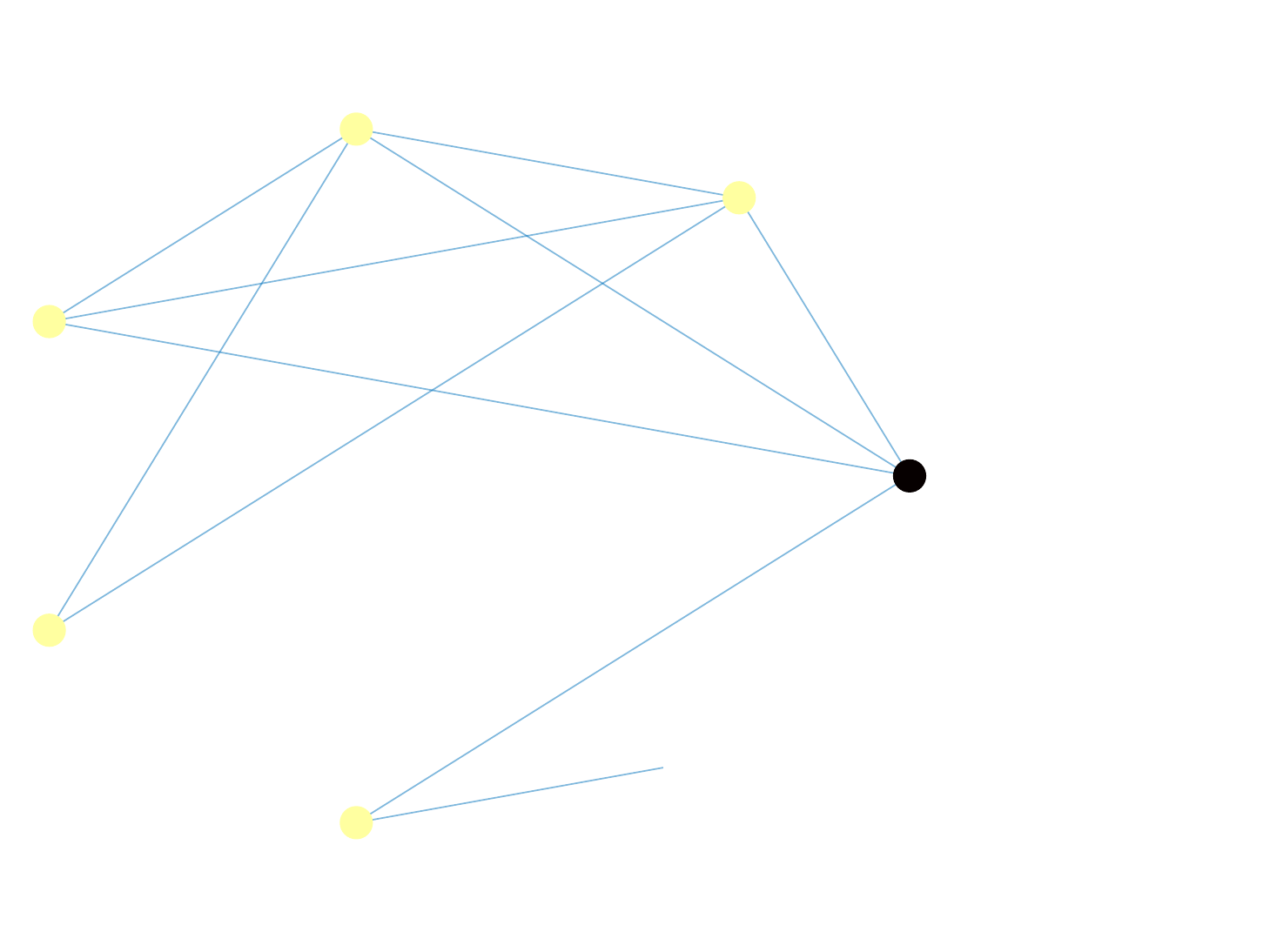    
\caption{Example 2; Epidemic spreading over a graph with $n=7$ nodes. Node colour indicates outbreak rate $\lambda_i$, weightings $w_{ij}$ are indicated at the edges, node marker size indicates cost $c_i$.}
\label{fig:GraphAM1}
\end{figure}

We are, first of all, interested to see if our method can indicate the critical link between node $1$ and $6$. Hence, we take our objective as minimizing the risk \eqref{eq:Rlamb}, while being able to allocate resources on the links, i.e. spreading rate, take $\Gamma_\beta=1$ and obtain Fig. \ref{fig:AM1}. Our method does indeed suggest allocating resources to this particular link, especially considering the larger cost involved in breaking up the link between the nurse and the associated patient. We compare our method to minimizing the spectral radius \eqref{eq:spec}, see Fig. \ref{fig:AM1S}. It can be seen, that by only minimizing the spectral radius instead of our proposed risk model, the same critical link is not identified. 

\begin{figure}[!ht]
\centering
              \def\svgwidth{0.85\linewidth}
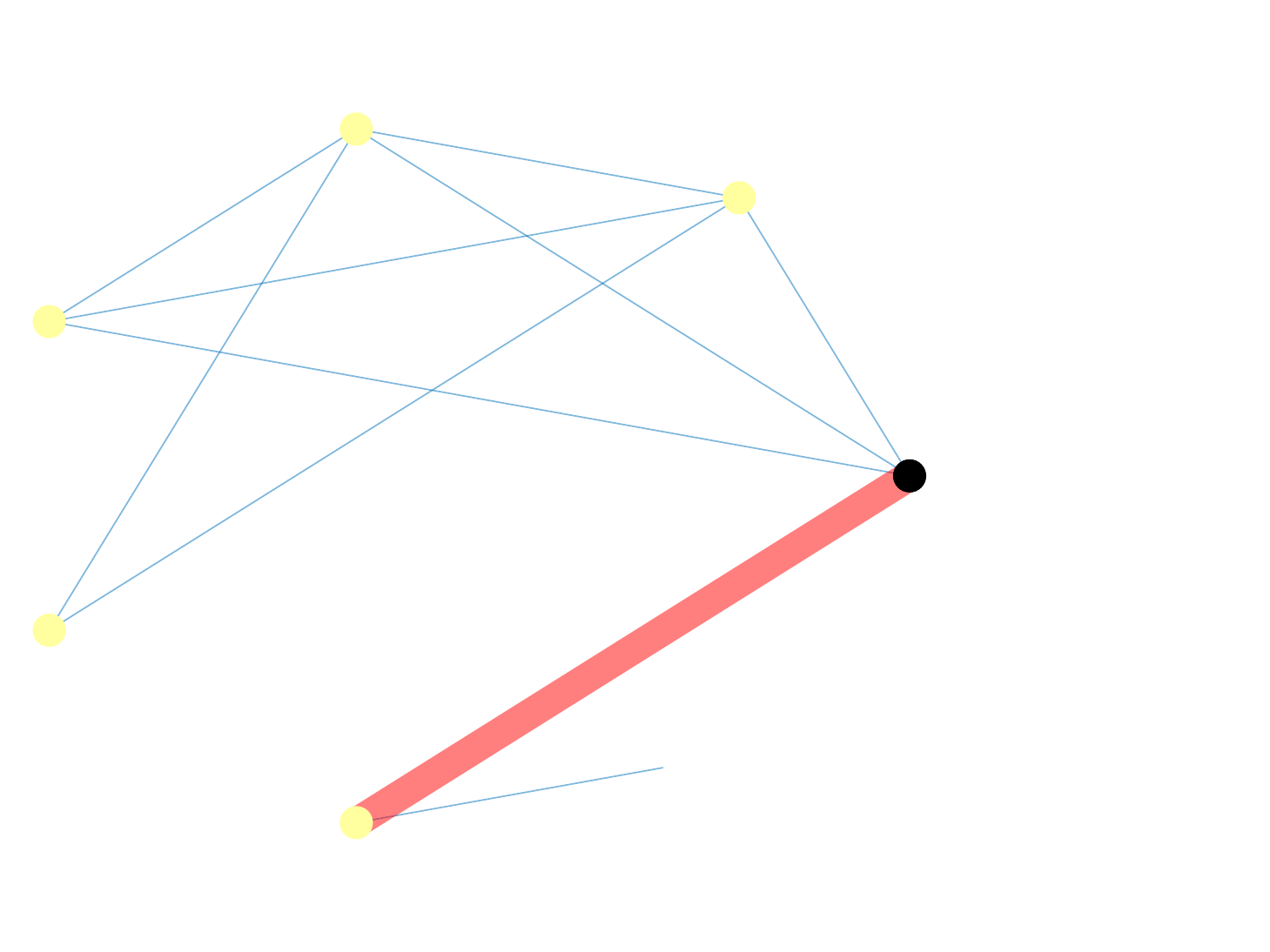   
\caption{Resource allocation for minimizing the risk \eqref{eq:Rlamb}. Node colour indicates outbreak rate $\lambda_i$, weightings $w_{ij}$ are indicated at the edges, node marker size indicates cost $c_i$.}
\label{fig:AM1}
\end{figure}

\begin{figure}
\centering
              \def\svgwidth{0.85\linewidth}
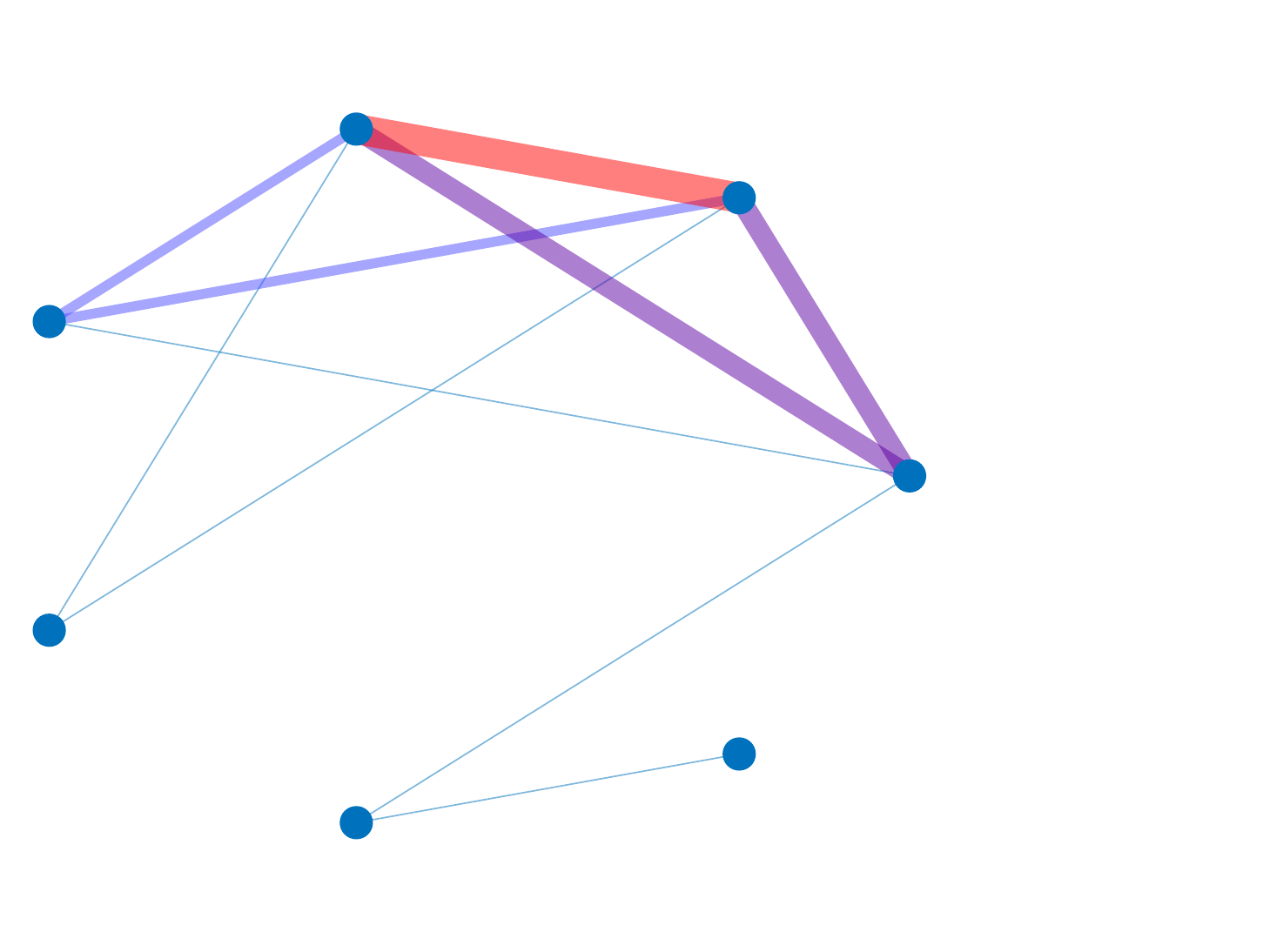   
\caption{Resource allocation for minimizing the spectral radius \eqref{eq:spec}.}
\label{fig:AM1S}
\end{figure}

Another common scenario in relation to epidemics, is vaccination. We model vaccination in such a way that similar to other common methods vaccination of a node increases both its recovery rate \cite{forster2007optimizing,hansen2011optimal,Han2015} and all incoming spreading rates \cite{zaman2008stability,hethcote2000mathematics,chen2006susceptible,kar2011stability}. Setting $\overline{\Delta}=1$ the obtained resource allocation, or vaccination strategy is given in Fig. \ref{fig:7BDcVac}. The initial zero outbreak rate of node 7 and higher weight on the edge between node 6 and 7, make in this case vaccination of node 7 not relevant. 

\begin{figure}
\centering
              \def\svgwidth{0.85\linewidth}
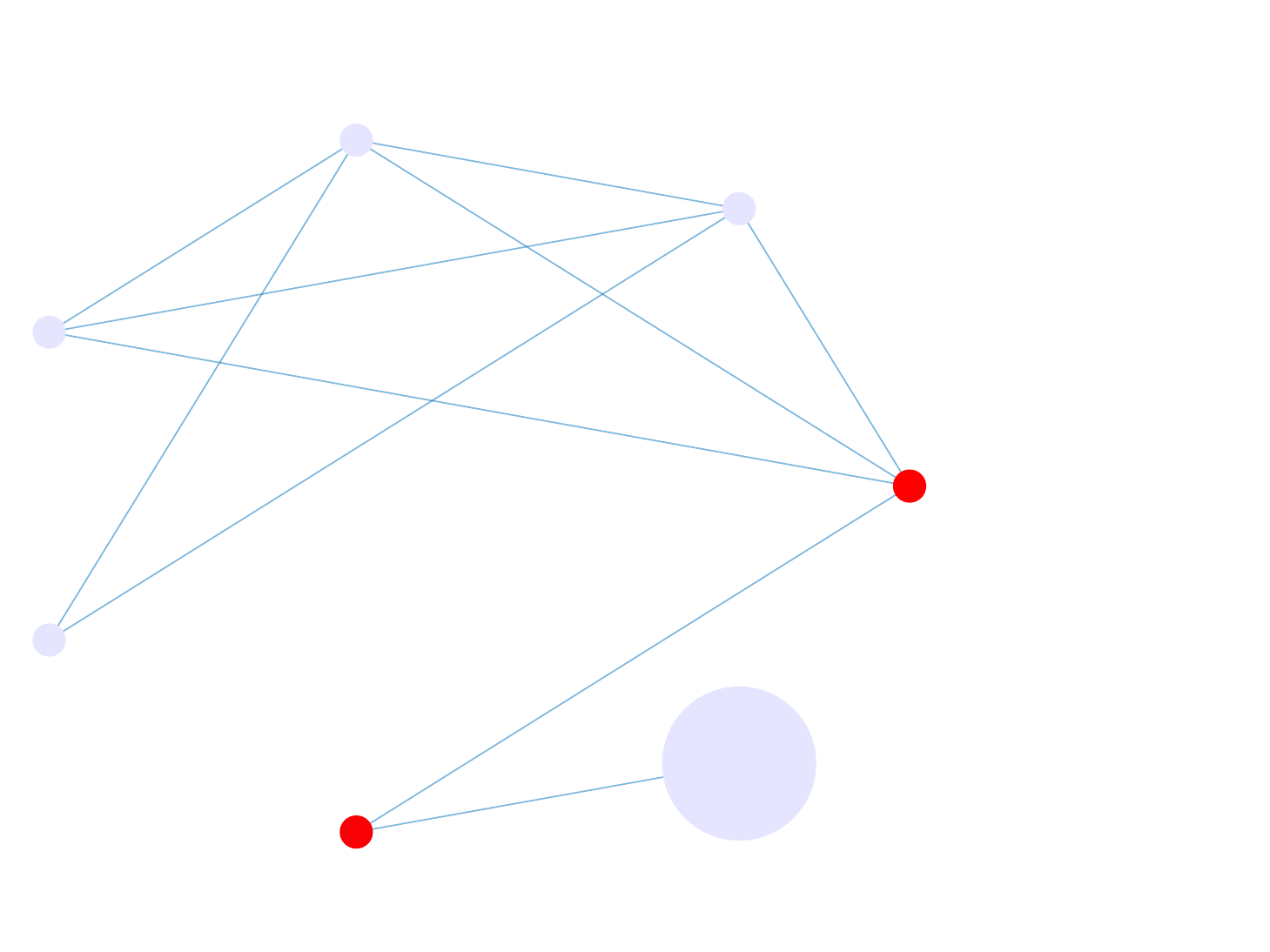   
\caption{Resource allocation for vaccination scenario. Node colour indicates nodes vaccinated, i.e. $\beta_{ij}$ is reduced and $\delta_i$ is increased for nodes $i=1$ and $i=6$, $r=1.3$.}
\label{fig:7BDcVac}
\end{figure}
%

\subsection{Air Transportation Network}
We now demonstrate resource allocation and policy making in regards to risk management for a larger network given a spreading process, here an epidemic. We consider the domestic US air transportation network graph based on the amount of passengers transported in 2014 \cite{USdata} consisting of $n=359$ nodes and 2097 edges as visualized in Fig. \ref{fig:AirG}. All domestic air traffic within the US excluding Puerto Rico, Virgin Islands and Guam is considered, where multiple airports that serve the same city are combined and routes that have less than 10.000 pax are omitted.  

\begin{figure}
\centering
\includegraphics[width=0.9\linewidth]{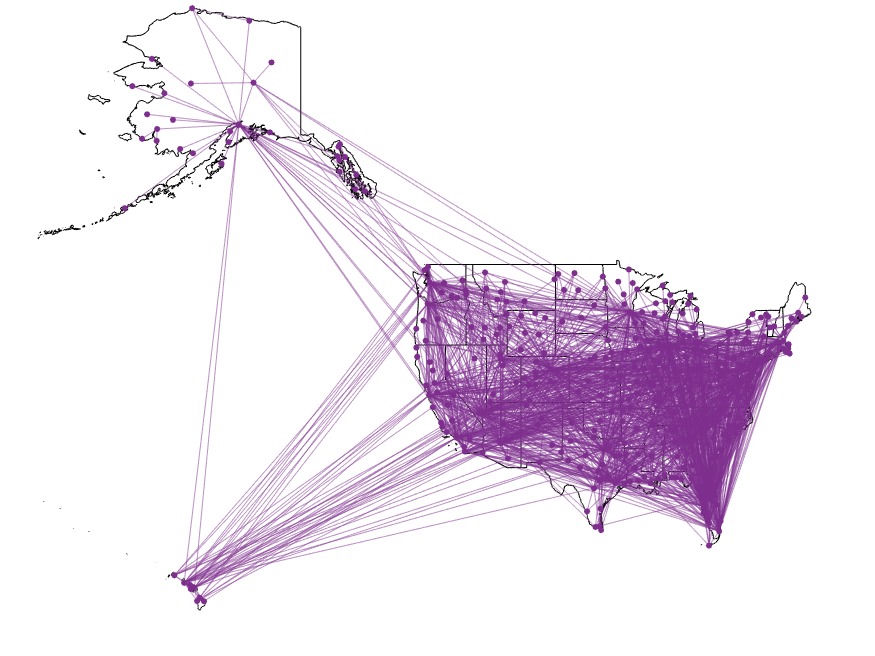}
\caption{Example 3; Epidemic spreading over a graph of the domestic US air transportation network with 359 nodes and 2097 edges based on \cite{USdata}.}
\label{fig:AirG}
\end{figure}

Now considering $\beta$ allocation only and the spread of SARS, we obtain a spreading rate of $\beta=0.25$ per day, a recovery rate of $0.0352$ and a death rate of $0.0279$ \cite{chowell2004model}. For simplicity we consider recovered nodes as removed as well due to (temporary) immunity. For the resource model we now base the weighting $w_{ij}$ on the number of passenger transported, taking the logic that the more flights and passengers, the higher the cost of reducing this traffic. Our objective is now to minimize the amount of risk by 50 percent, given an outbreak in Philadelphia (PHL) ($\hat{x}(0)_{i}=1$). The cost $c_i$ equals the normalized number of passengers served per airport and $r=10.7$. Note that this is a variation of the proposed Problem \ref{P2}, namely risk-constrained resource minimization. The resulting risk map is given by Fig. \ref{fig:AirRA} and reduces passenger numbers on 53 out of 2097 routes.

\begin{figure}
\centering
\def\svgwidth{0.95\linewidth}
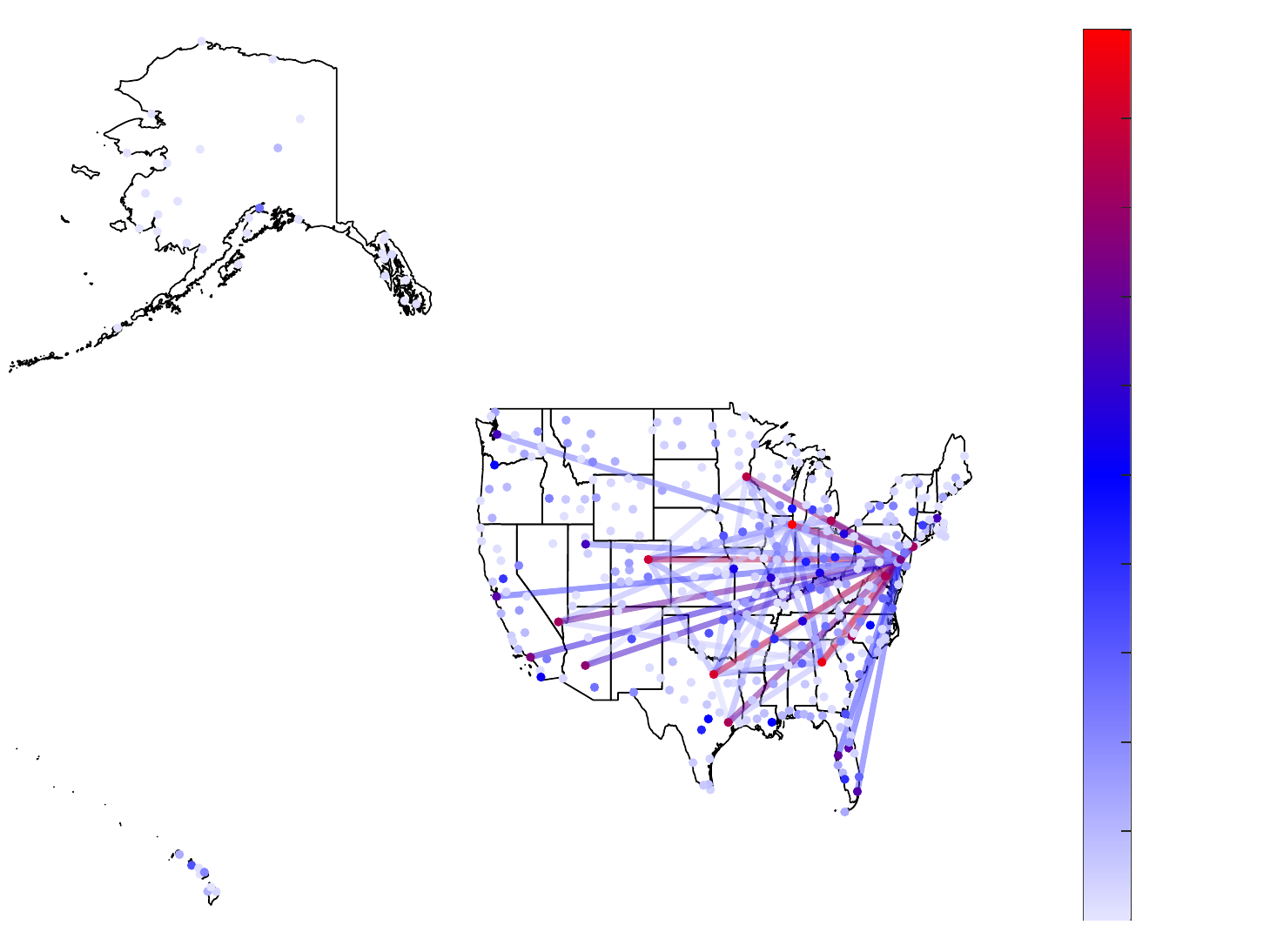   
\caption{Resource allocation given known outbreak in PHL for reducing the risk by 50\%. 53 edges have resources allocated. }
\label{fig:AirRA}
\end{figure}

We can further improve sparsity by solving the weighted $\ell_{1}$ minimization as explained in Section \ref{subsec:L0}. The resulting allocation is demonstrated in Fig. \ref{fig:AirL1}, here the number of affected routes is reduced to only 11 instead of 53, while achieving the same risk as in Fig. \ref{fig:AirRA}. Hence, by stopping air traffic between Philadelphia and respectively Atlanta, Charlotte, Chicago, Dallas, Denver, Detroit, Houston, Las Vegas, Minneapolis/St. Paul, New York City and Washington DC, risk can be reduced by 50 percent while leaving other busy air routes intact. 

\begin{figure}
\centering
\def\svgwidth{0.95\linewidth}
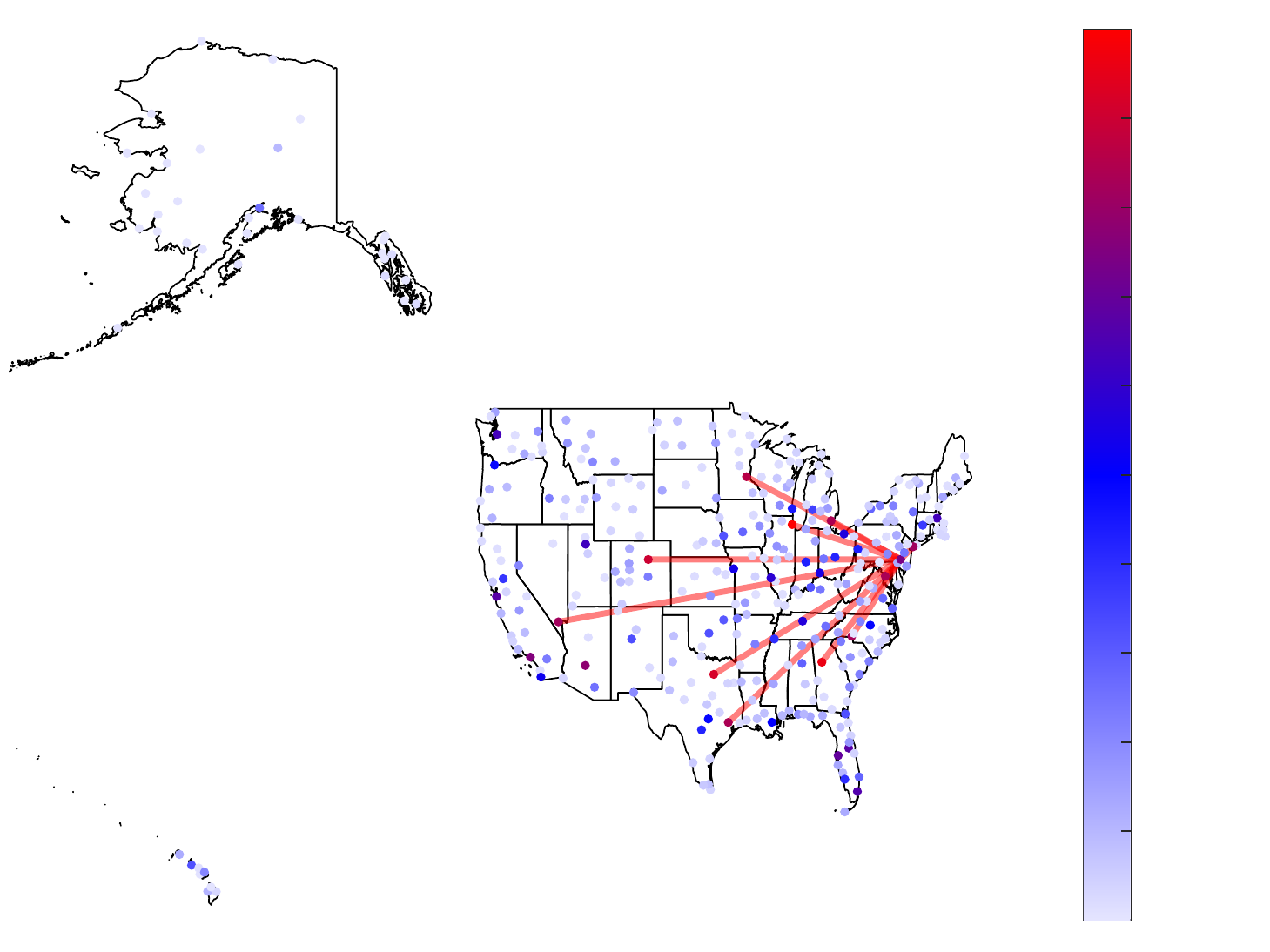   
\caption{Resource allocation after using reweighted $\ell_1$ minimization on Fig. \ref{fig:AirRA}. 11 edges have resources allocated.}
\label{fig:AirL1}
\end{figure}


\subsection{Wildfire Example}
Finally we consider a wildfire example to demonstrate risk maps, revisit maps and resource allocation in combination with persistent monitoring. 

Let us consider a fictional landscape with varying vegetation, a city and water as visualized in Fig. \ref{fig:NL4000V}. It can be represented as a network graph with $n=4000$ nodes. The adjacency matrix and therefore, the set of edges $\mathcal{E}$, is based on a 8 node spreading direction grid, i.e., the landscape is seen as a grid where each node is connected to its direct horizontal, vertical and diagonal neighbors. The recovery rate $\delta=0.5$ for all nodes, whereas $\beta$ is generated using data from cellular automata wildfire models presented in \cite{Karafyllidis1997a} and \cite{Alexandridis2008a}, where the wildfire spreading probabilities are determined based on real wildfire observations. For the spreading dynamics $\beta_{veg}=2 \cdot \text{Vegetation}$, where the vegetation value is indicated in Fig. \ref{fig:NL4000V} and $\in [0,1]$ for respectively minimal and maximal vegetation. For the city $\beta_{veg}=0.5$ is taken and for the nonburnable water areas $\beta_{veg}=0$.  Finally, $\beta$ is corrected for spreading between diagonally connected nodes, following \cite{Karafyllidis1997a}. The cost $c_i$ and outbreak rate $\lambda_i$ are given in respectively Fig. \ref{fig:NL4000C} and Fig. \ref{fig:NL4000L}.

\begin{figure}
\centering
\begin{subfigure}[b]{0.85\linewidth}
\def\svgwidth{1\textwidth}
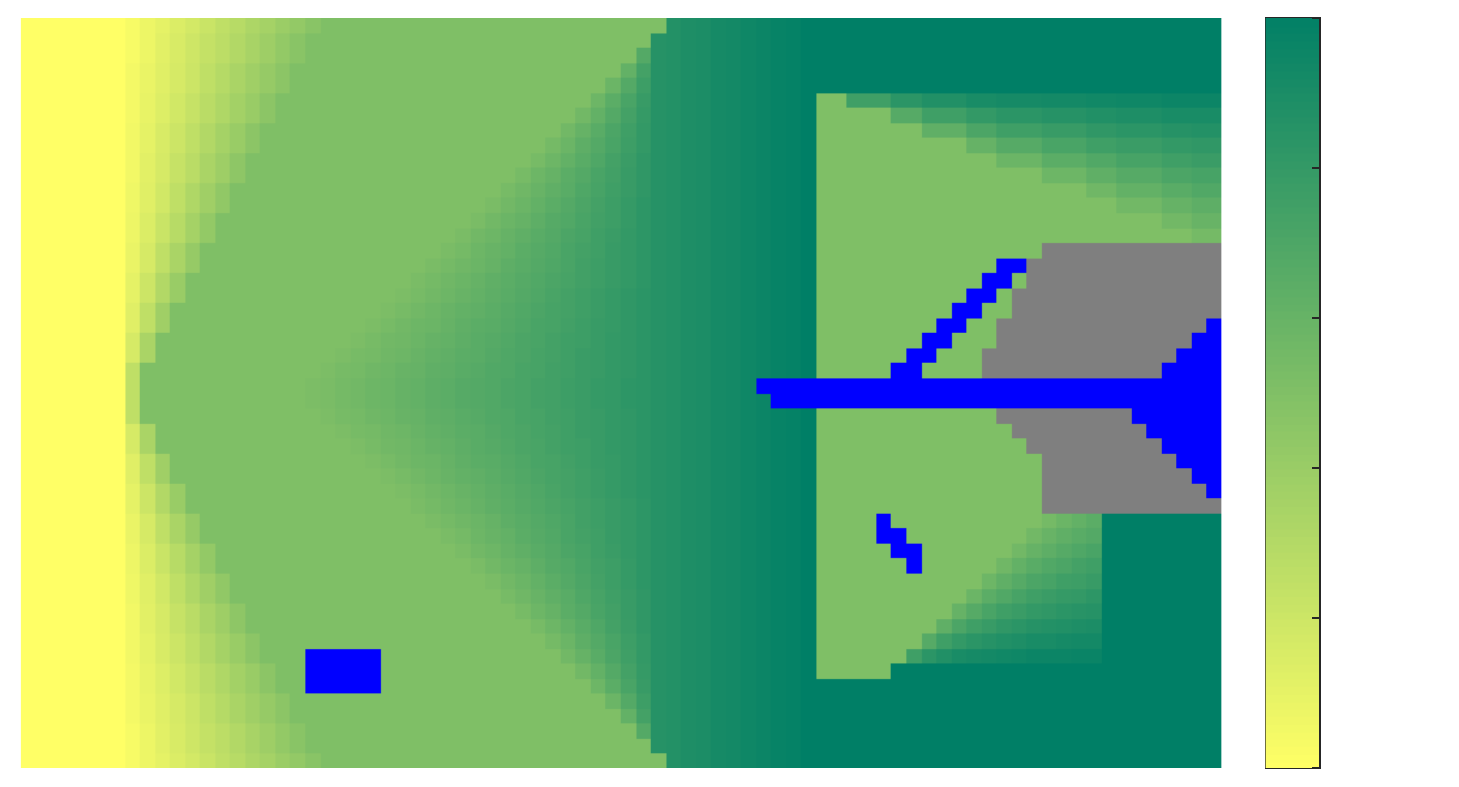   
      \caption{Vegetation Map; blue indicates water, grey city area.}
\label{fig:NL4000V}
\end{subfigure}
\begin{subfigure}[b]{0.48\linewidth}
\def\svgwidth{1\textwidth}
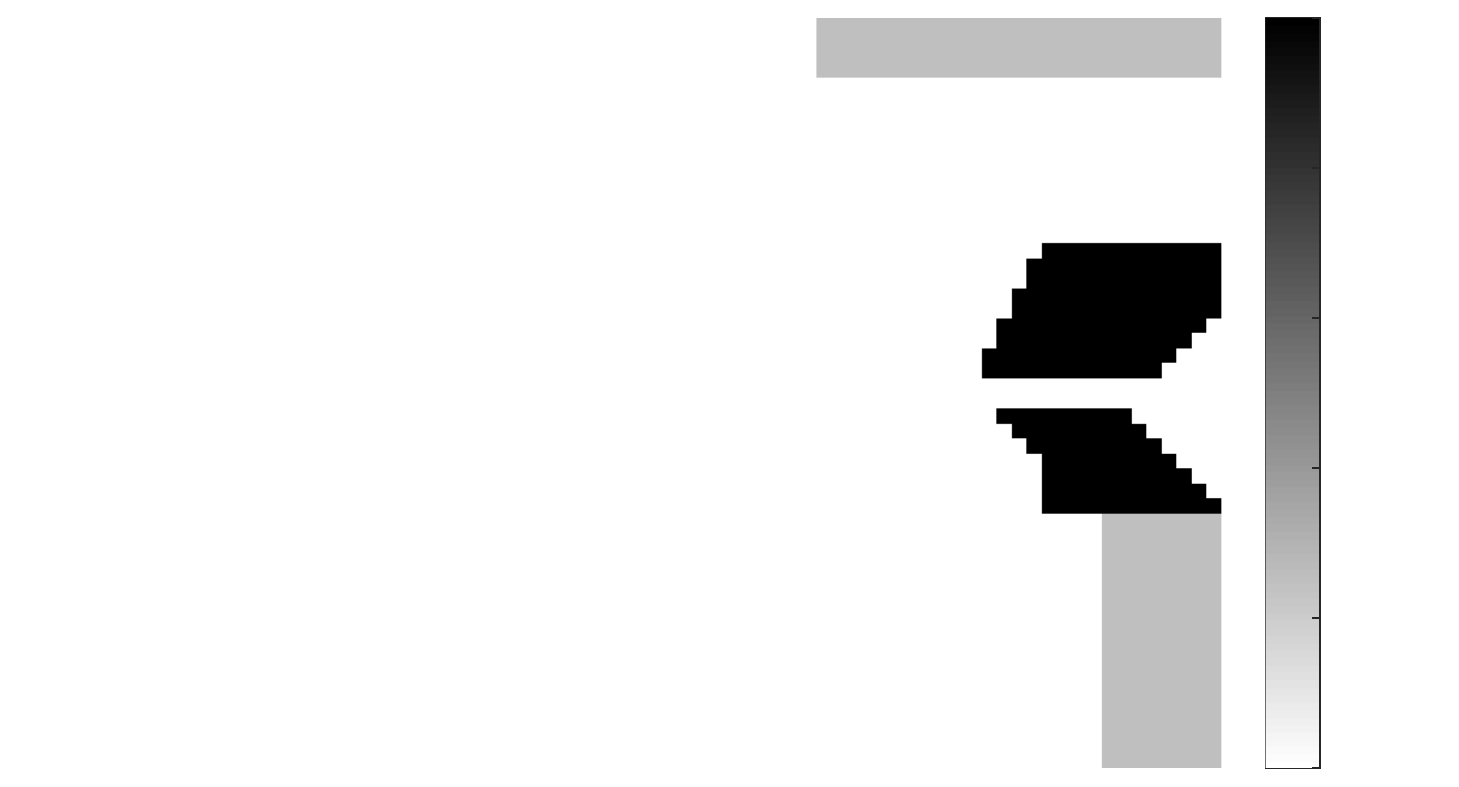   
      \caption{Cost Map}
      \label{fig:NL4000C}
  \end{subfigure}
  ~ 
     \begin{subfigure}[b]{0.48\linewidth}
\def\svgwidth{1\textwidth}
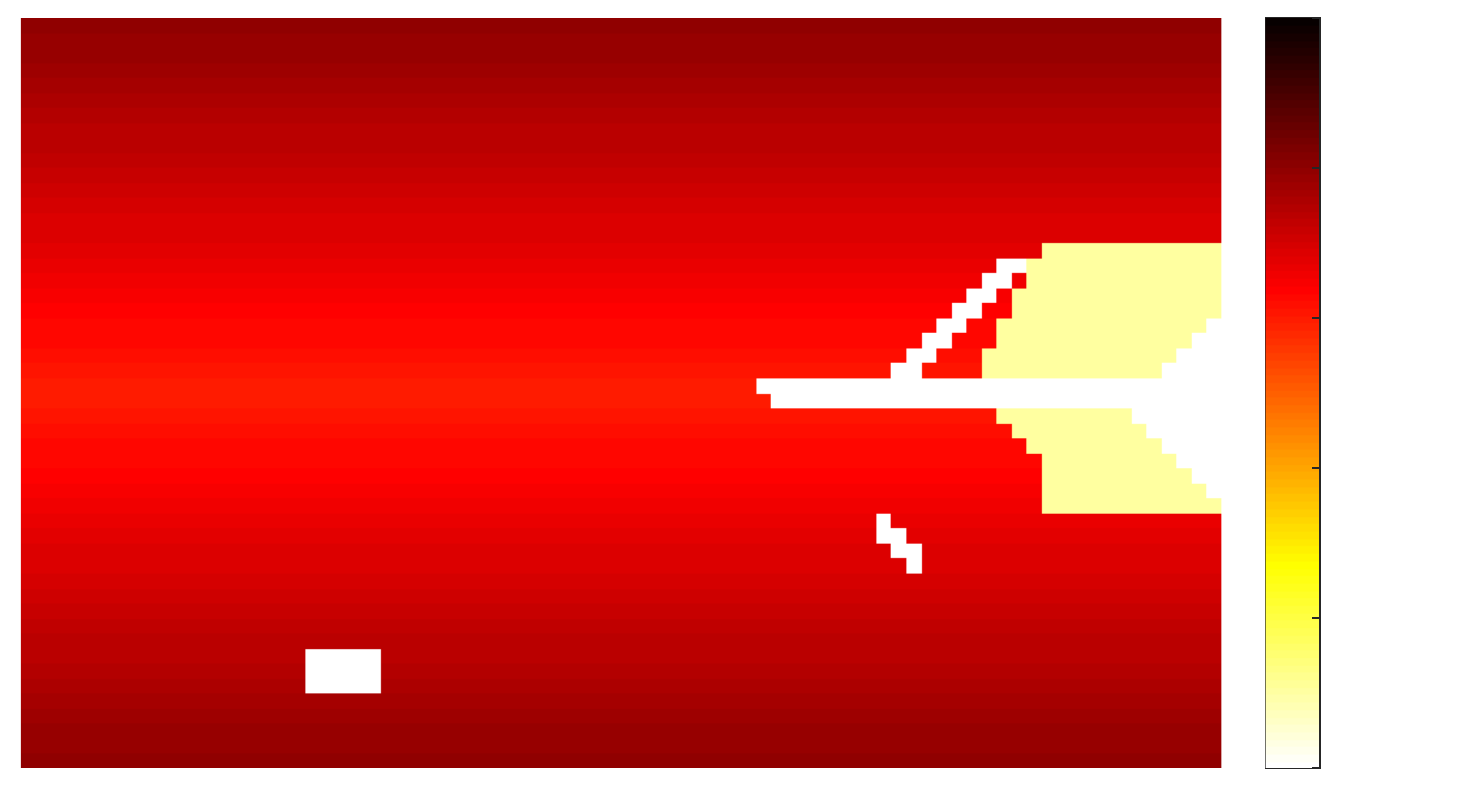   
      \caption{Outbreak Rate}
      \label{fig:NL4000L}
  \end{subfigure}
      \caption{Example 4; Wildfire spreading over a graph with $n=4000$ nodes representing a simplified landscape with its corresponding cost map and outbreak rate.}
\label{fig:NL4000}
\end{figure}

The spreading rate is furthermore adjusted for respectively a westerly, northerly, easterly and southerly wind with a speed of $V=8$ m/s. Taking a discount rate of $r=4$, this results in Fig. \ref{fig:4000Wind2}, where the effects of wind direction on the risk map can be seen. Large spreading rates, caused by e.g. eucalyptus forest, in combination with wind direction form the most direct hazards to the high cost city nodes. 

\begin{figure}
\centering
    \begin{subfigure}[b]{0.48\linewidth}   
\def\svgwidth{0.99\textwidth}
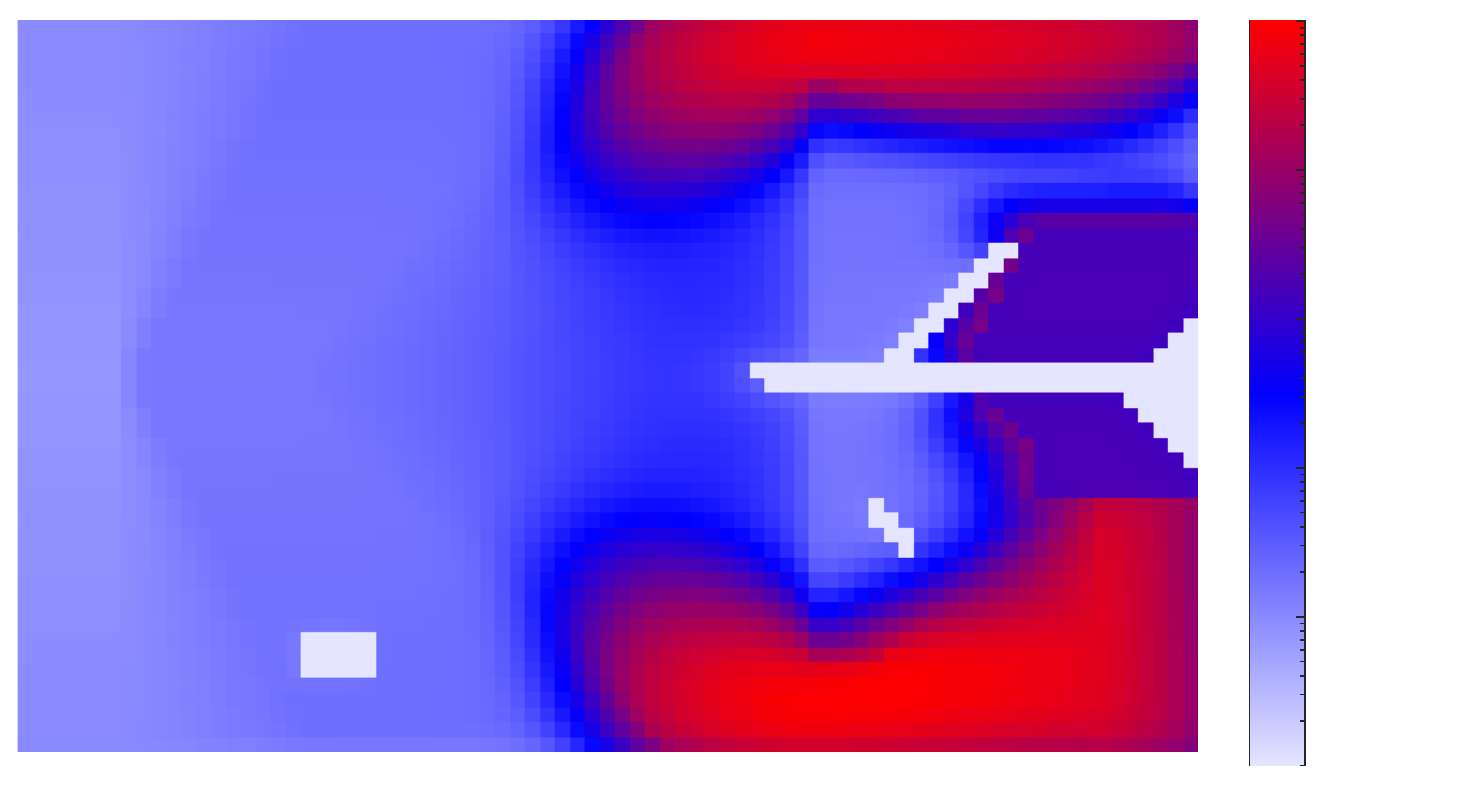          
            \caption{West}
            \label{fig:W22}
    \end{subfigure}%
    \begin{subfigure}[b]{0.48\linewidth}
            \centering
\def\svgwidth{0.99\textwidth}
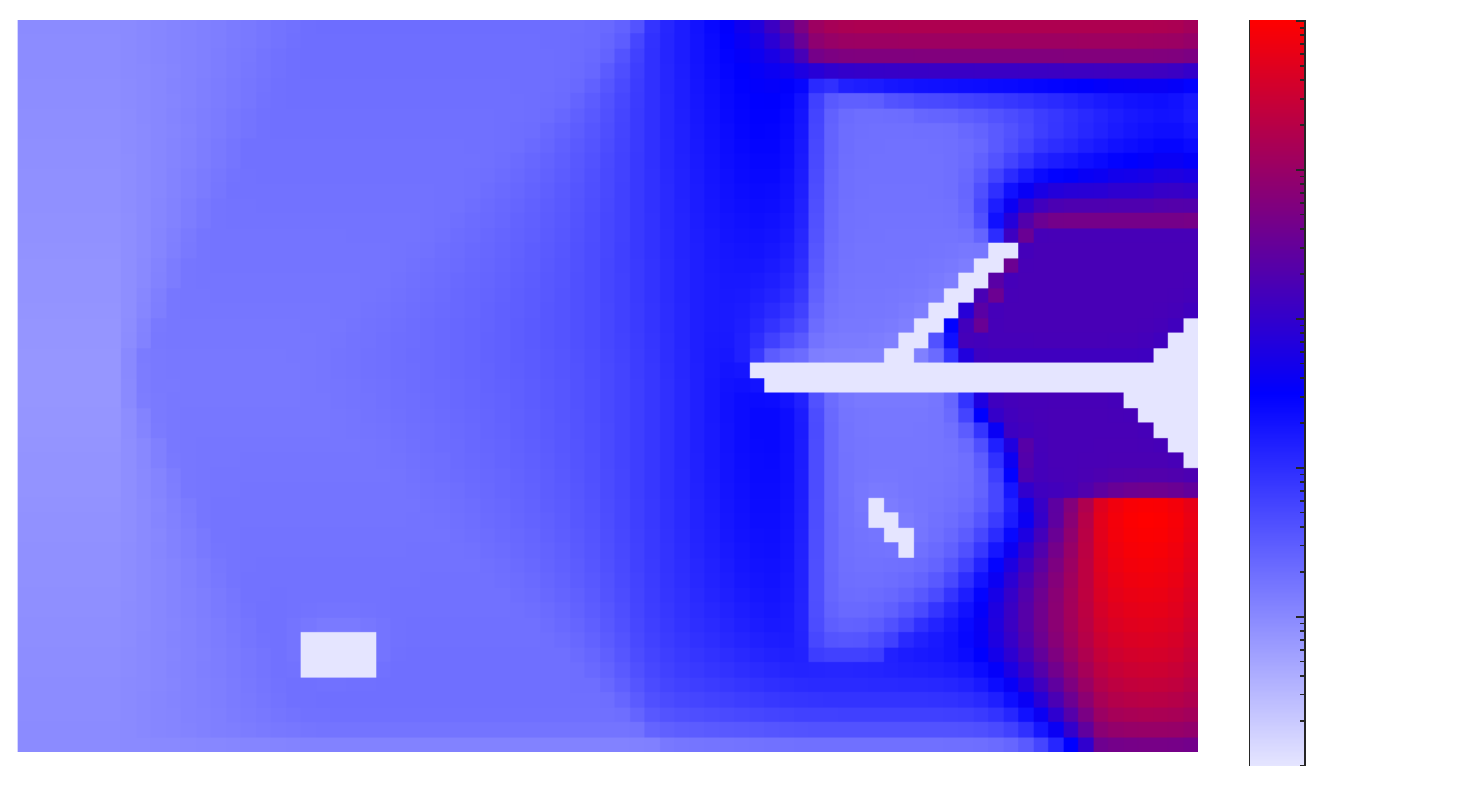    
            \caption{North}
            \label{fig:N22}
    \end{subfigure}
    \begin{subfigure}[b]{0.48\linewidth}            
\def\svgwidth{0.99\textwidth}
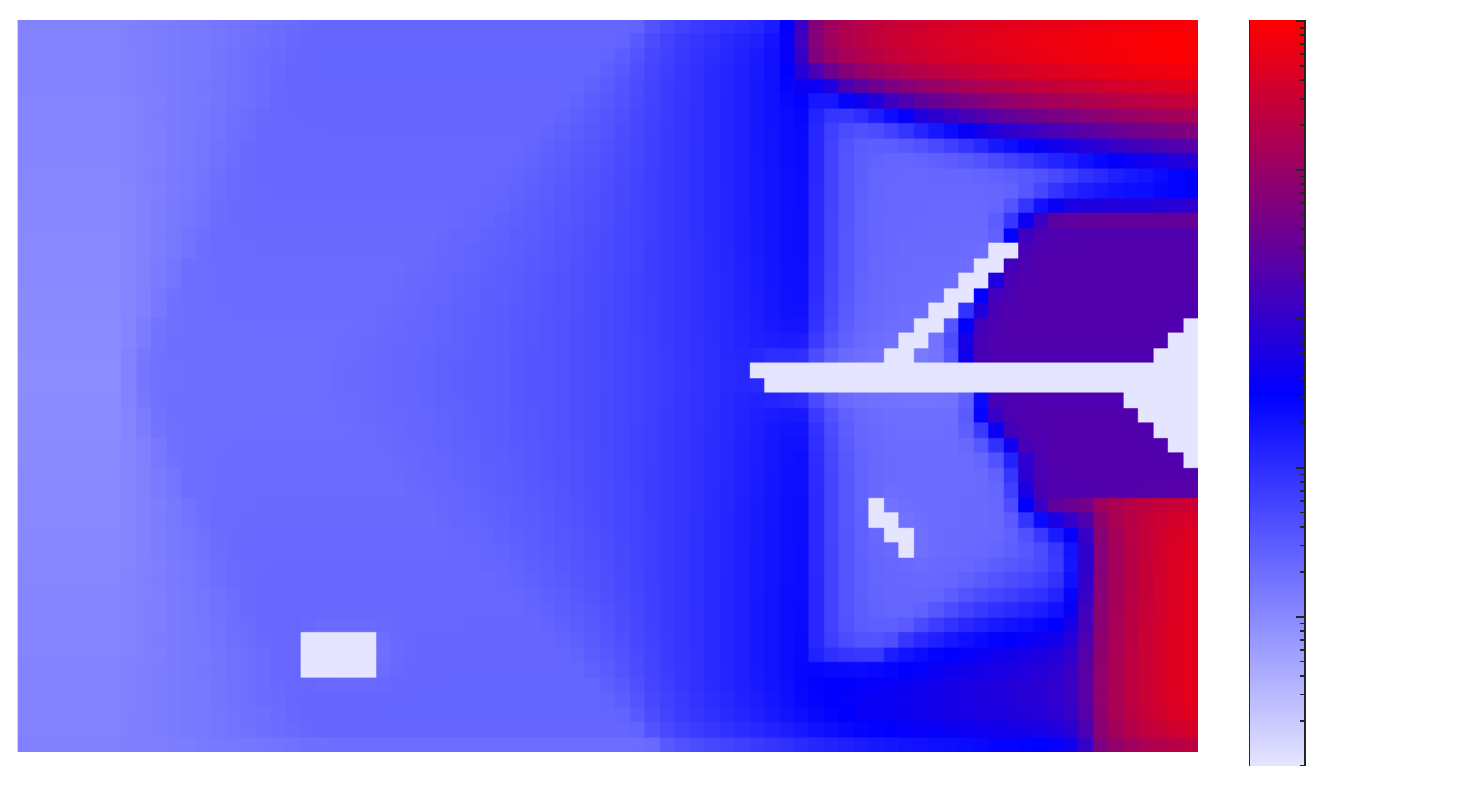    
            \caption{East}
            \label{fig:E22}
    \end{subfigure}%
    \begin{subfigure}[b]{0.48\linewidth}
            \centering
 \def\svgwidth{0.99\textwidth}
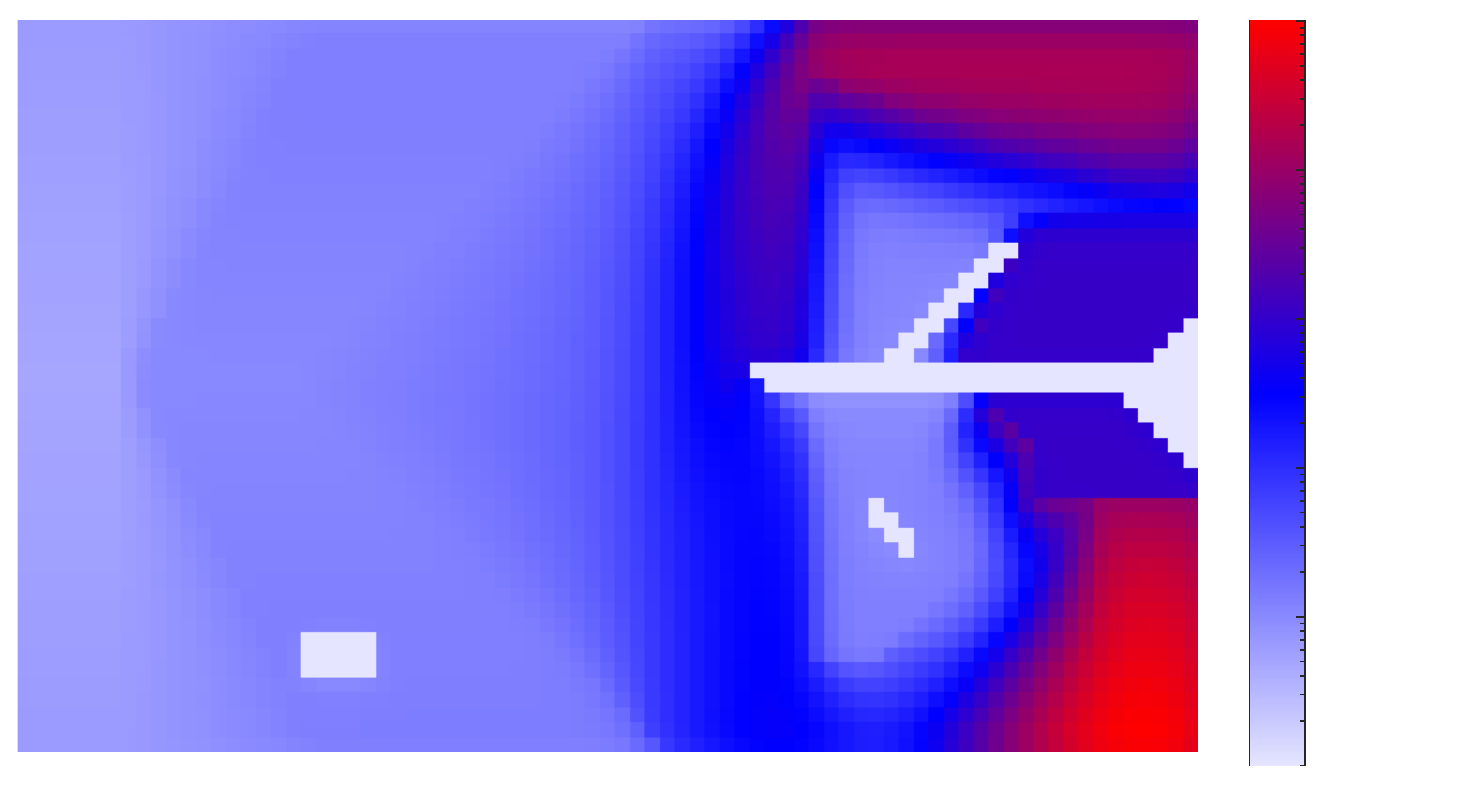    
            \caption{South}
            \label{fig:S22}
    \end{subfigure}
    \caption{Logarithmic surveillance risk map for different wind directions for the landscape in Fig. \ref{fig:NL4000}. Wind direction is indicated as the direction the wind is coming from.}
\label{fig:4000Wind2}
\end{figure}

It should be noted that the generated risk maps are for a revisit rate of $1$, i.e. $\frac{1}{\tau}=1$. If we now want to minimize the maximum risk by 50\% by use of a revisit schedule, i.e. $R_\text{max}=0.5R_0$ where $R_0$ is the risk in Fig. \ref{fig:W22}, we can obtain the smallest revisit rate by using Theorem \ref{th:SP}, assuming $\epsilon_R$ is negligible. The results are displayed in Fig. \ref{fig:SPR}. 

\begin{figure}
\centering
\def\svgwidth{0.85\linewidth}
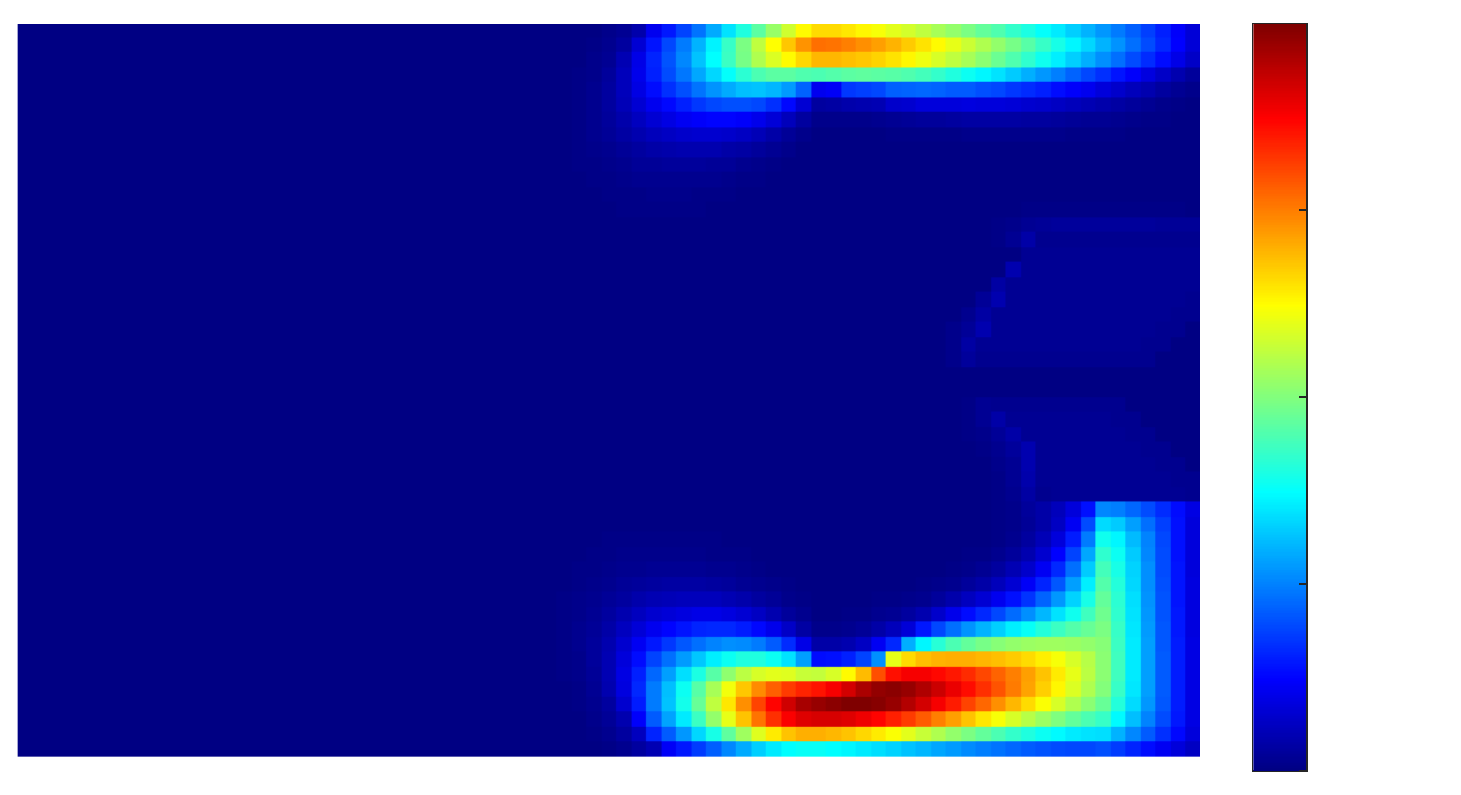   
      \caption{Minimum revisit rate $\frac{1}{\tau_i}$ needed to reduce the maximum risk of Fig. \ref{fig:W22} by 50\%.}
 \label{fig:SPR}
\end{figure}

We now consider minimizing the risk of an undetected outbreak by allocating resources and by use of a revisit schedule as proposed in Problem \ref{P2}. Let us use the same example as the risk map demonstrated in Fig. \ref{fig:W22},  i.e. westerly wind of $V=8$ m/s and $r=4$. Our objective is to minimize the risk \eqref{eq:Rlamb} and we can allocate resources on the spreading rate $\beta$, outbreak rate $\lambda$ and revisit interval $\tau$ \eqref{eq:RM}. Now taking $\overline{\sigma}_i=\text{log}(8)$, $\Gamma_\beta=2000,\Gamma_\lambda=500$ and $\Gamma_\tau=1500$, we obtain the allocation in Fig. \ref{fig:4000BR}. It can be seen that resources are mainly allocated to high spreading areas that also have higher cost. 

\begin{figure}
\centering
     \begin{subfigure}[b]{0.85\linewidth}
    \def\svgwidth{1\textwidth}
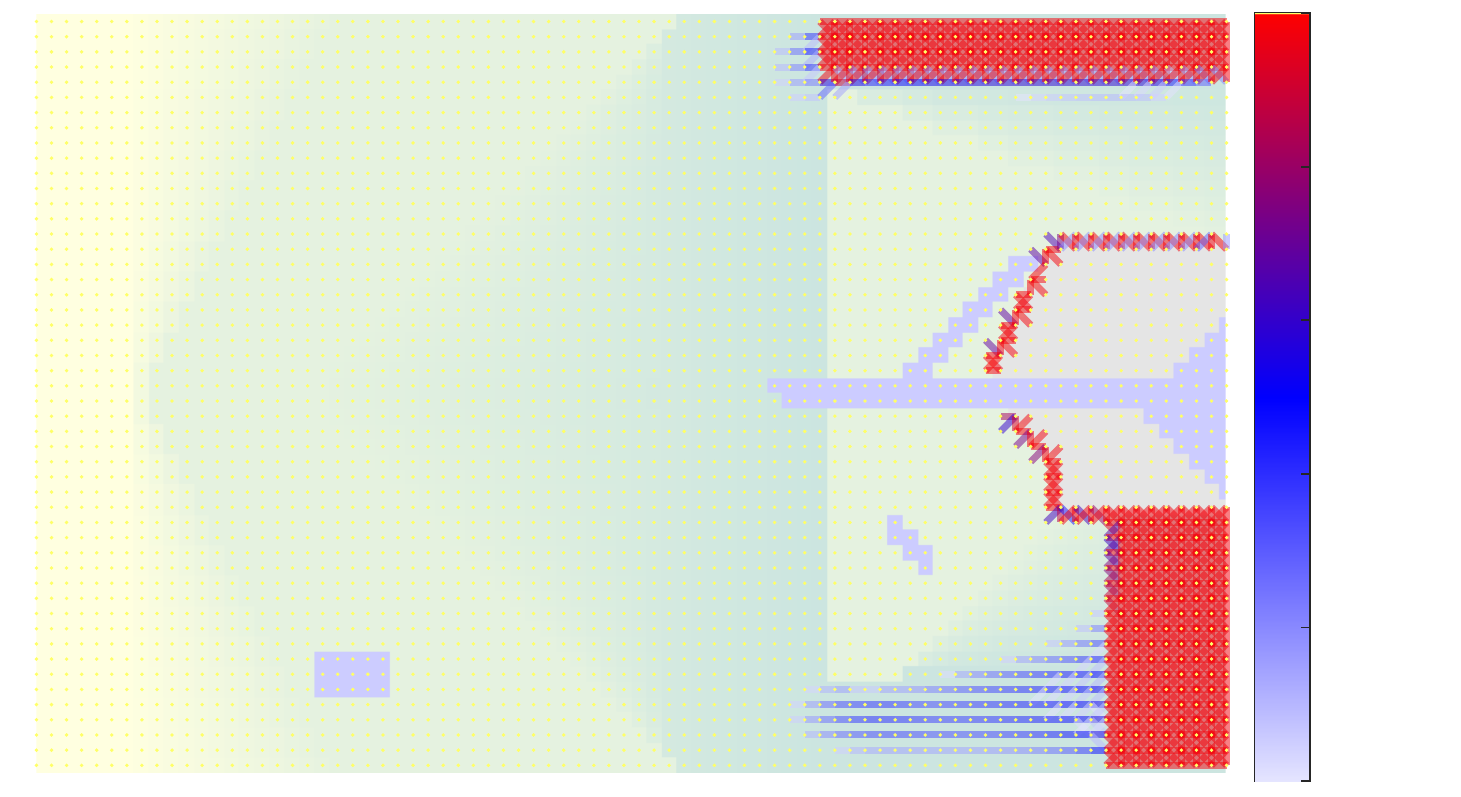  
      \caption{Resource allocation on spreading rate $\beta$}
      \label{fig:4000BRBeta}
  \end{subfigure}
\begin{subfigure}[b]{0.85\linewidth}
     \def\svgwidth{1\textwidth}
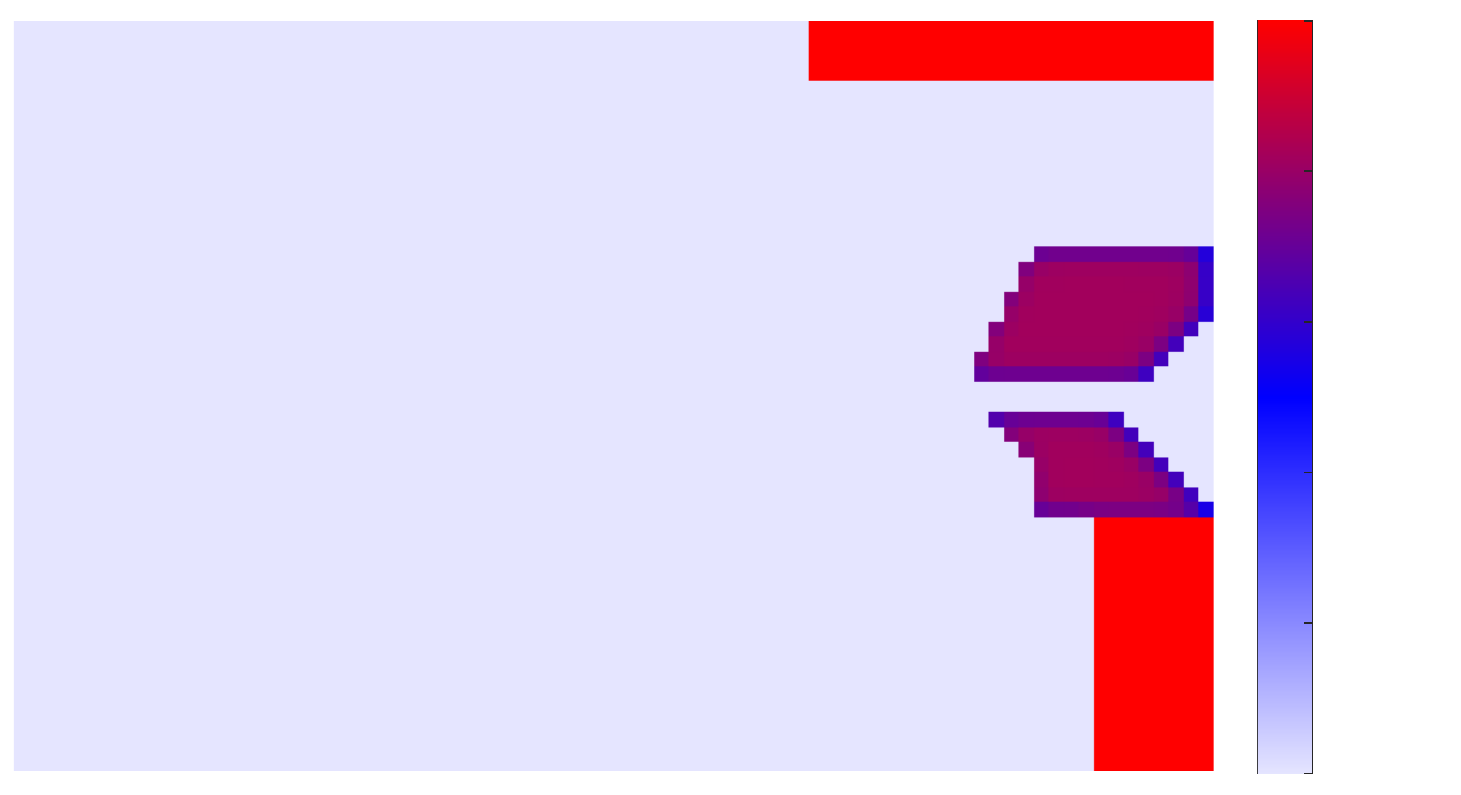
      \caption{Resource allocation on outbreak rate $\lambda$}
      \label{fig:4000BRTau}
  \end{subfigure}
  ~ 
     \begin{subfigure}[b]{0.85\linewidth}
     \def\svgwidth{1\textwidth}
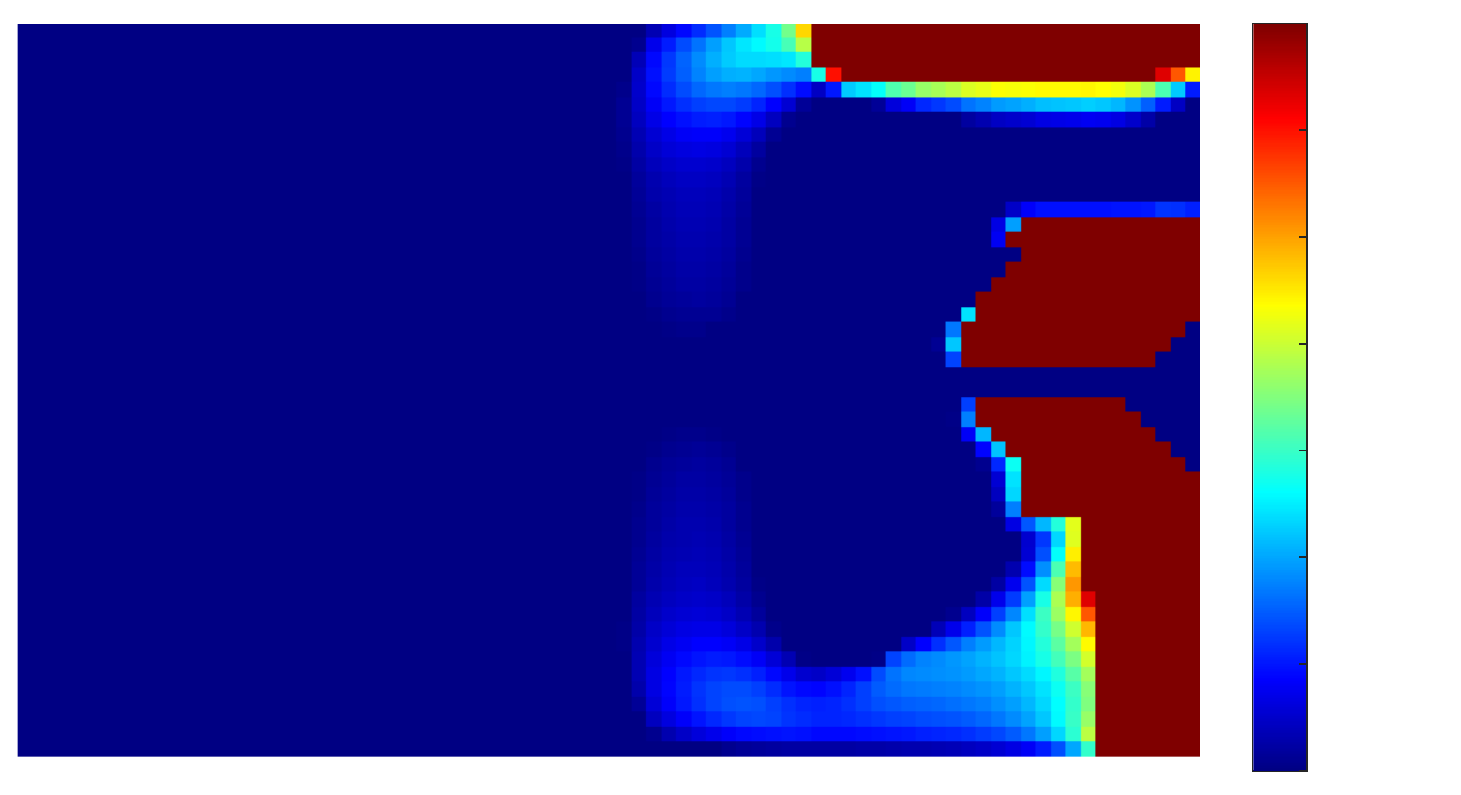
      \caption{Proposed revisit rate $\frac{1}{\tau}$}
      \label{fig:4000BRRevisit}
  \end{subfigure}
      \caption{Objective is to minimize the risk \eqref{eq:Rlamb} by use of resource allocation on $\beta$, $\lambda$ and $\tau$, budget $\Gamma_\beta=2000,\Gamma_\lambda=500$ and $\Gamma_\tau=1500$.}
 \label{fig:4000BR}
\end{figure}

If we want to increase sparsity further, we can use the reweighted $\ell_{1}$ minimization as explained in Section \ref{subsec:L0}. Increasing the upperbound for $\sigma_i$ to log$(16)$, the resulting allocation is visualized in Fig. \ref{fig:4000BRL1} for spreading rate $\beta$. The amount of edges with resources allocated is reduced from $1273$ to $289$ while achieving the same risk bound. 

\begin{figure}[!ht]
\centering
\def\svgwidth{0.85\linewidth}
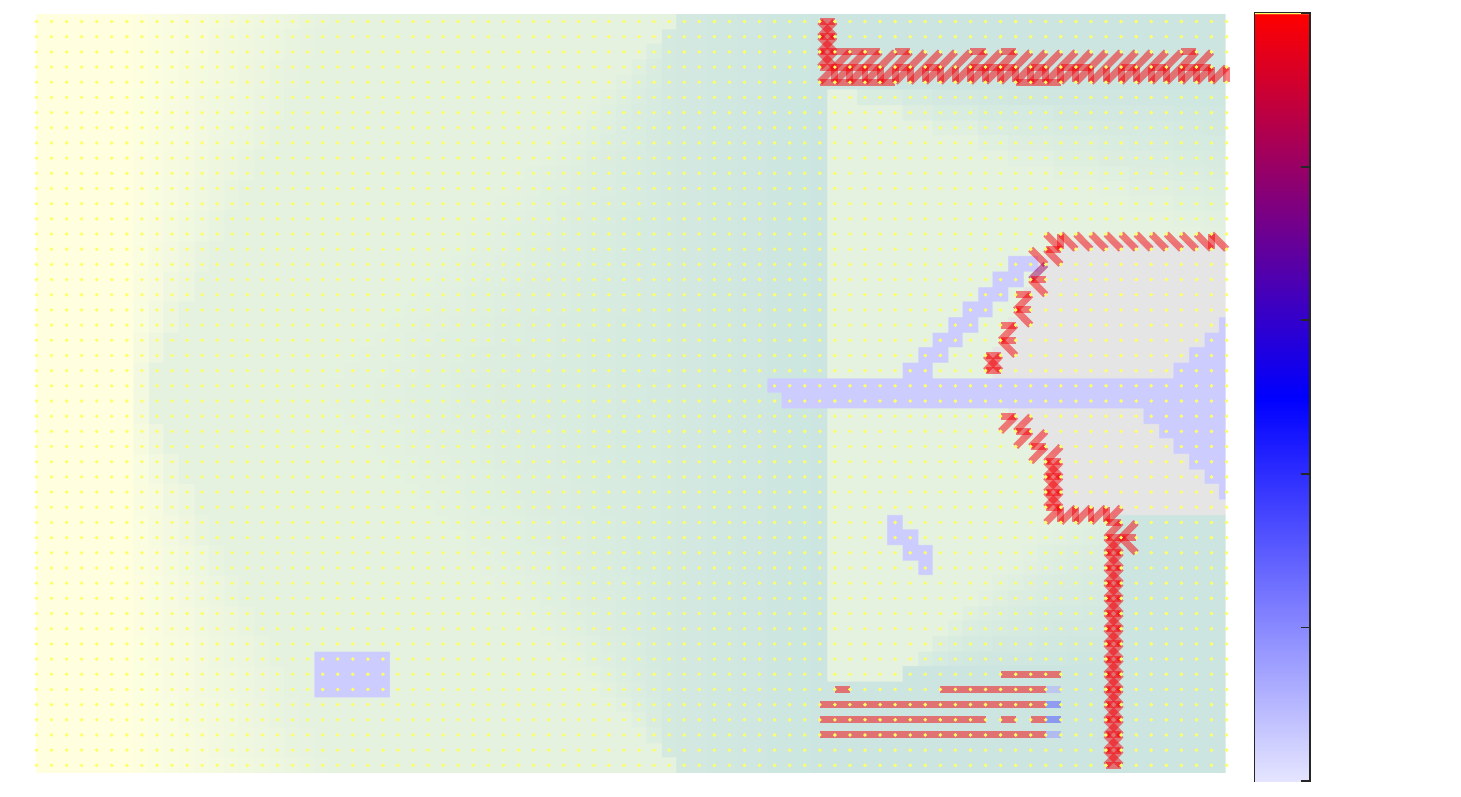
\caption{Resource allocation on spreading rate $\beta$ after using reweighted $\ell_{1}$ minimization on Fig. \ref{fig:4000BR}.}
\label{fig:4000BRL1}
\end{figure}
%

To demonstrate how the proposed method can connect spreading models to path planning, we give an example of using the revisit rate for persistent monitoring. For the persistent monitoring problem different algorithms and solution exists \cite{Pasqualetti2012, smith2012persistent,Alamdari2014a}, but we will use the minimum maximum latency walk from \cite{Alamdari2014a}. This method ranks nodes into different subclasses based on their value, where the higher the value and subclass the more often the node will be visited. Then different walks are created for which a travelling salemesman path is found. 

We assume that the UAV is able to monitor a 4x4 square of nodes at the same time and use the revisit map given in Fig. \ref{fig:4000BRRevisit} as input, where we take the maximum revisit rate of the 4x4 square of nodes rounded to the nearest integer. The minimum maximum latency walk for persistent monitoring as visualized in Fig. \ref{fig:MinMaxWalk} is obtained. 

\begin{figure}
\centering
\includegraphics[width=0.85\linewidth]{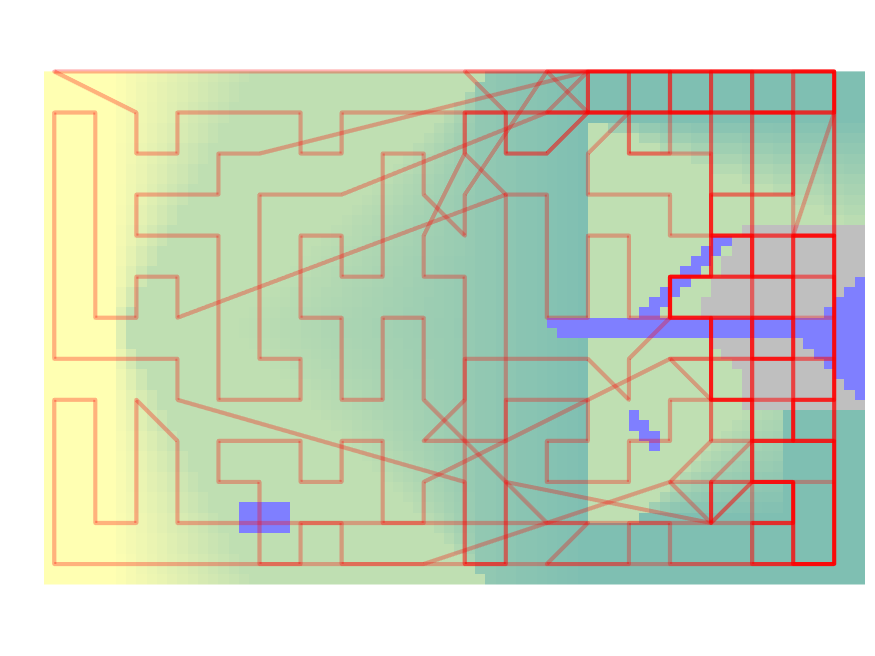}
\caption{Persistent monitoring path consisting of 8 subwalks based on the Revisit Map in Fig. \ref{fig:4000BRRevisit}. Darker lines indicate multiple subwalks.}
\label{fig:MinMaxWalk}
\end{figure}

To validate our proposed method which is based on various bounds and approximations, we simulate the original stochastic model \eqref{eq:Stoc} a 1000 times and compare total cost and warning time. Warning time is taken as time between fire detection and reaching a high cost node (i.e. $c_i\geq0.25)$. Fire detection is simulated by using the path as shown in Fig. \ref{fig:MinMaxWalk}, whereas for the comparison model, the path is taken as a standard lawnmower pattern. We assume that the UAV flies 24 times faster than the fire spread, based on average fire spread rate \cite{Alexandridis2008a,cruz201910} and taking an average UAV speed of $20$m/s \cite{pan2018dynamic}. We compare our proposed resource allocation, as shown in Fig. \ref{fig:4000BR}, with minimizing the spectral radius \eqref{eq:spec}, taking the same resource allocation budget, see Fig. \ref{fig:Scatter}. It can be seen that the proposed method overall has a higher reduction in cost and a longer warning time for higher cost runs. 

\begin{figure}
\centering
 \def\svgwidth{0.85\linewidth}
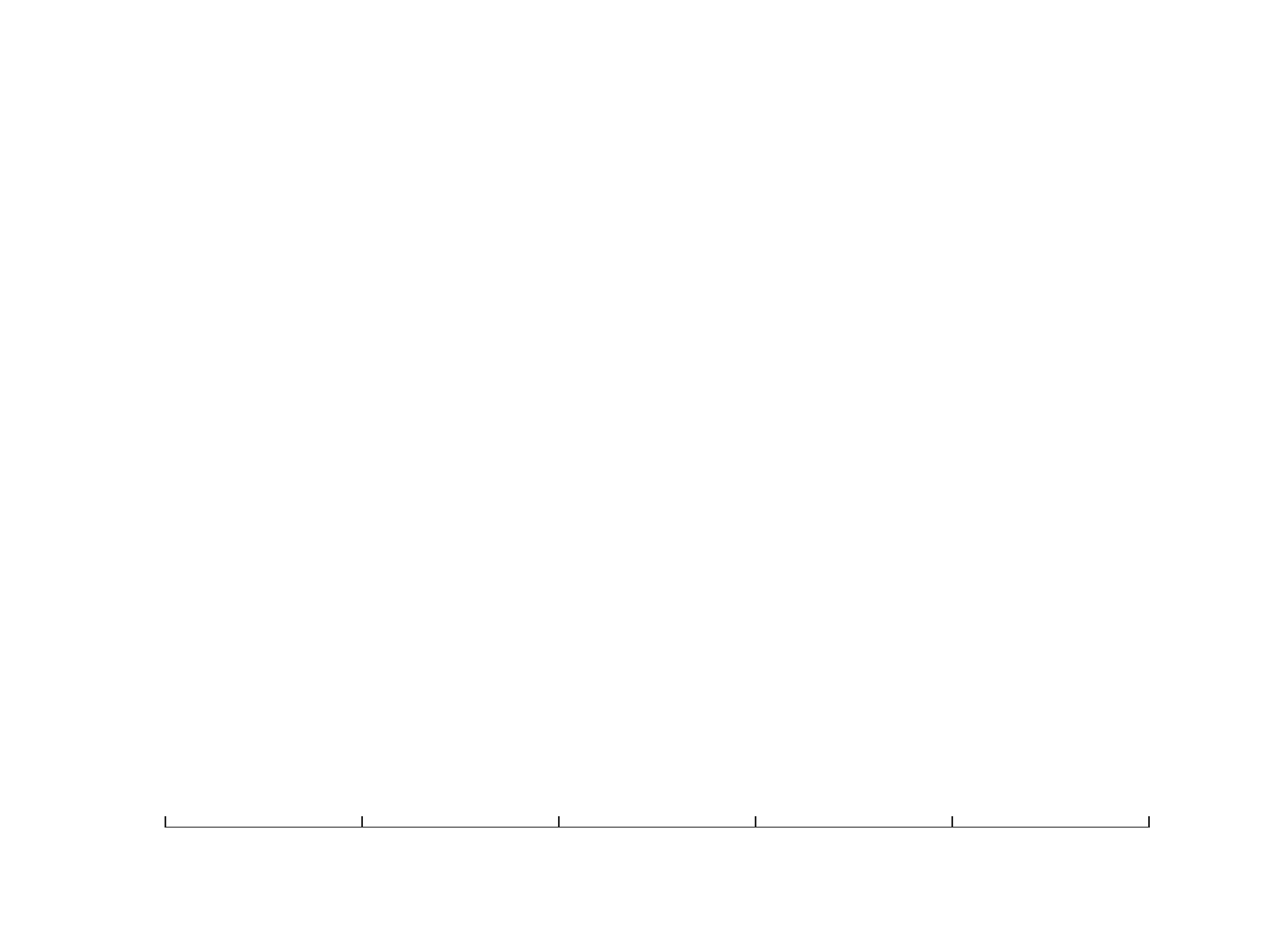
\caption{Comparing total cost and warning time of the stochastic model for our proposed resource allocation, Fig. \ref{fig:4000BR}, versus minimizing the spectral radius \eqref{eq:spec}.}
\label{fig:Scatter}
\end{figure}

\subsection{Computation time}
In this subsection we examine scalability of the proposed method. The bound on the risk and the minimum revisit rate can be calculated directly using \eqref{eq:bound} and the intervention optimization problems were formulated using Julia \cite{bezanson2017julia} and solved with MOSEK v9.2 on an Intel i7, 2.6GHz, 8GB RAM. For the presented examples up to 4000 nodes this is in the order of seconds and as demonstrated in Fig. \ref{fig:Run} the run time grows linearly in the number of nodes, which suggests that the method is scalable to very large networks. 

\begin{figure}
\centering
 \def\svgwidth{1\linewidth}
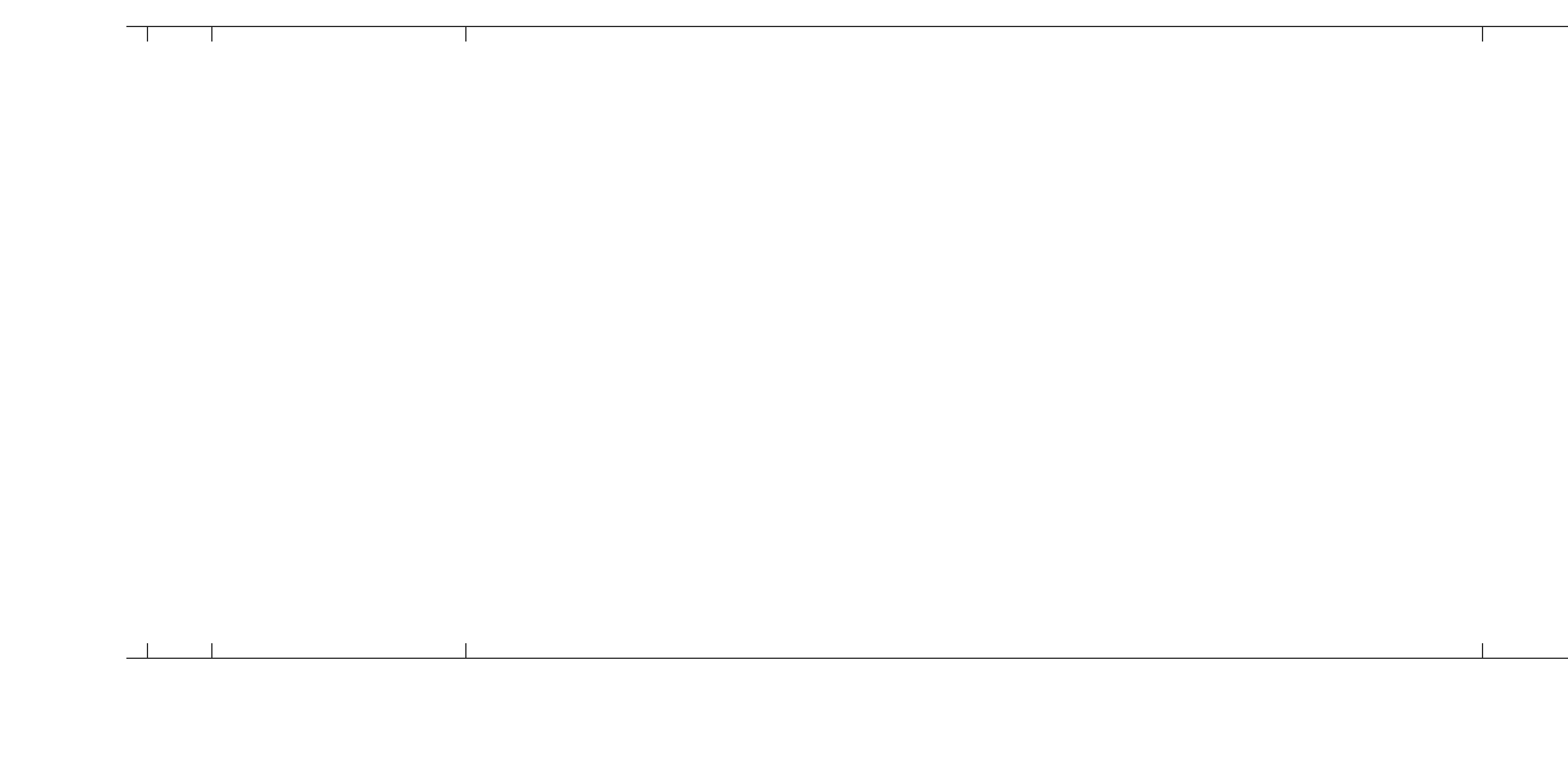
\caption{Run time for Fig. \ref{fig:4000BR} in different dimensions in seconds.}
\label{fig:Run}
\end{figure}
%

\section{Conclusion}
In this paper we presented a flexible optimization framework to bound and minimize the risk of spreading processes by use of both surveillance schedules and sparse control. A risk model was presented that takes into account node dependent costs and risk was defined as the product of the outbreak rate, revisit interval and impact of the outbreak. With multiple examples of different spreading processes we conveyed the use and scalability of the presented framework. 

Future work will include more realistic propagation models and sensing assumptions, multi-UAV planning scenarios and time dependent intervention. 

%
%
\bibliographystyle{IEEEtran}
\bibliography{IEEEabrv,ACFR2021}

\end{document}